\crefname{appsec}{Appendix}{Appendices}
\theoremstyle{plain}
\newtheorem{theorem}{Theorem}
\newtheorem{lemma}[theorem]{Lemma}
\newtheorem{corollary}[theorem]{Corollary}
\newtheorem*{lemmamainthree}{Lemma~\ref{lem:main3}}
\newtheorem*{lemmacoloring}{Lemma~\ref{lem:coloring}}
\newtheorem*{lemmaind}{Lemma~\ref{lem:ind}}
\theoremstyle{definition}
\newtheorem{definition}[theorem]{Definition}
\newtheorem*{assumption*}{Assumption}
\newtheorem{remark}[theorem]{Remark}
\newtheorem{example}[theorem]{Example}
\crefname{lemma}{Lemma}{Lemmas}
\crefname{theorem}{Theorem}{Theorems}
\crefname{definition}{Definition}{Definitions}
\crefname{fact}{Fact}{Facts}
\crefname{claim}{Claim}{Claims}
\crefname{proposition}{Proposition}{Propositions}
\newcommand{\norm}[1]{\left\lVert #1 \right\rVert}
\newcommand{\dist}{\mathrm{dist}}
\newcommand{\fpras}{\mathsf{FPRAS}}
\newcommand{\fptas}{\mathsf{FPTAS}}
\newcommand{\eee}{\mathrm{e}}
\newcommand{\eps}{\varepsilon}
\newcommand{\Dt}{\Delta}
\newcommand{\gm}{\gamma}
\newcommand{\Gm}{\Gamma}
\newcommand{\BB}{\mathbf{B}}
\newcommand{\lambdab}{\boldsymbol{\lambda}}
\newcommand{\Lambdab}{\boldsymbol{\Lambda}}
\newcommand{\alphab}{\boldsymbol{\alpha}}
\newcommand{\betab}{\boldsymbol{\beta}}
\newcommand{\gb}{\mathbf{g}}
\newcommand{\fb}{\mathbf{f}}
\newcommand{\Lb}{\mathbf{L}}
\newcommand{\onesb}{\mathbf{1}}
\newcommand{\Fc}{\mathcal{F}}
\newcommand{\Bc}{\mathcal{B}}
\newcommand{\Pc}{\mathcal{P}}
\newcommand{\polymer}{\mathrm{pmer}}
\def\rb{\ensuremath{\mathbf{r}}}
\def\cb{\ensuremath{\mathbf{c}}}
\def\ones{\ensuremath{\mathbf{1}}}
\newcommand{\emm}{\mathrm{e}}
\newcommand{\TT}{\intercal}
\title{Sampling Colorings and Independent Sets of Random Regular Bipartite Graphs in the Non-Uniqueness Region}
\author{Zongchen Chen\thanks{School of Computer Science, Georgia Institute of Technology, Atlanta, GA 30332, USA.
		Email: \{chenzongchen, vigoda\}@gatech.edu.
		Research supported in part by NSF grant CCF-2007022.}
	\and
	Andreas Galanis\thanks{Department of Computer Science, University of Oxford, Wolfson Building, Parks Road, Oxford, OX1 3QD, UK.
	Email: andreas.galanis@cs.ox.ac.uk.}
	\and
	Daniel \v{S}tefankovi\v{c}\thanks{Department of Computer Science, University of Rochester, Rochester, NY 14627, USA.
		Email: stefanko@cs.rochester.edu.
		Research supported in part by NSF grant CCF-2007287.}
	\and
	Eric Vigoda$^\star$
}
\date{\today}
\begin{document}

\maketitle

\begin{abstract}
For spin systems, such as the $q$-colorings and independent-set models, approximating the partition function in the so-called non-uniqueness region, where the model exhibits long-range correlations, is typically computationally hard for bounded-degree graphs.  We present new algorithmic results for approximating the partition function and sampling from the Gibbs distribution for spin systems in the non-uniqueness region on random regular bipartite graphs.   We give an $\fpras$ for counting $q$-colorings for even $q=O\big(\tfrac{\Delta}{\log{\Delta}}\big)$ on almost every $\Delta$-regular bipartite graph.  This is within a factor $O(\log{\Delta})$ of the sampling algorithm for general graphs in the uniqueness region and 
improves significantly upon the previous best bound of $q=O\big(\tfrac{\sqrt{\Delta}}{(\log\Delta)^2}\big)$ by Jenssen, Keevash, and Perkins (SODA'19).
Analogously, for the hard-core model on independent sets weighted by $\lambda>0$, we present an $\fpras$ for 
estimating the partition function when $\lambda=\Omega\big(\tfrac{\log{\Delta}}{\Delta}\big)$, 
which  improves upon previous results by an $\Omega(\log \Delta)$ factor. Our results for the colorings and hard-core models follow from a general result that applies to arbitrary spin systems. Our main contribution is to show how to elevate probabilistic/analytic bounds on the marginal probabilities for the typical structure of phases on random bipartite regular graphs into efficient algorithms, using the polymer method. We further show evidence that our result for colorings is within a constant factor of best possible using current polymer-method approaches.
\end{abstract}

\thispagestyle{empty}

\newpage

\setcounter{page}{1}

\section{Introduction}

Polymer models have recently been used to obtain algorithms for spin systems in regimes where standard algorithmic tools (such as correlation-decay algorithms or Gibbs sampling/Glauber dynamics) are inefficient. The prototypical class of graphs where polymer models have been applied to are classes of expander and random regular graphs \cite{JKP, Cannon, cluster, BR19,  liao2019counting,  chen2019fast, biclique}, see also \cite{helmuth2019algorithmic,Borgs,flows} for applications on the grid. 

Random bipartite regular graphs are particularly tantalizing \cite{JKP,liao2019counting, biclique}, since on the one hand there is a somewhat standard probabilistic framework to obtain rough analysis estimates for arbitrary spin systems on them (using first/second moment arguments \cite{antiferro}), but on the other hand the corresponding algorithmic framework, and in particular the development of efficient sampling/counting algorithms, is  lacking. 

This paper will focus on finding the algorithmic limits of the polymer method for the two canonical models of interest, the $q$-colorings and the hard-core model (weighted independent sets), though our results apply much more generally as we will detail later. One of the main contributions of this work is to elevate the rough  guarantees obtained by analytic/probabilistic methods into efficient approximate sampling/counting algorithms.

We begin with the colorings problem: given an integer $q\geq 3$ and a graph $G=(V,E)$ of maximum degree $\Delta$,
the goal is to approximate the number of proper $q$-colorings of $G$, and sample a proper $q$-coloring uniformly at random.   For general graphs there is 
an intriguing computational 
phase transition that is conjectured to occur at the statistical physics phase transition for uniqueness/non-uniqueness of the Gibbs measure on
the infinite $\Delta$-regular tree.  When $q\geq\Delta+2$ it is conjectured that the simple single-site update Markov chain known as the Glauber
dynamics is rapidly mixing on any graph of maximum degree $\Delta$ (rapid mixing refers to a convergence rate which is polynomial in $n=|V|$).  
In contrast when $q\leq\Delta$ it is believed that the problem is intractable.

Current bounds are far from resolving this conjecture but have made considerable progress.  On the algorithmic side, recent results establish
$O(n\log{n})$ mixing time of the Glauber dynamics on an $n$-vertex graph of maximum degree $\Delta$ when $q>(11/6-\eps_0)\Delta$ for a
positive constant $\eps_0\approx 10^{-5}$~\cite{BCCPSV,liu,CDMPP19} and on triangle-free graphs when $q>1.764\Delta$~\cite{CLV20,FGYZ21,CGSV21}.
On the negative side, it was shown in~\cite{antiferro} that for {\em even} $q<\Delta$ it is NP-hard to approximate the number of $q$-colorings. The restriction that $q$ is even in this hardness result is rather technical and is a byproduct of a certain maximisation which was carried out in \cite{antiferro} for even $q$.

The above results address the problem on worst-case graphs; in this paper we address the behavior on \emph{typical/random} graphs.  In this vein, random regular {\em bipartite} graphs
are particularly interesting as they manifest the phase transition of regular trees, and consequently they serve as the key gadget in hardness results~\cite{Sly,SlySun,GSV-ising,antiferro,DFJ,CCGL}.
However, standard approximate counting techniques, such as Markov Chain Monte Carlo (MCMC), fail in the non-uniqueness region; e.g., the Glauber dynamics is exponentially slow to converge, with high probability over the choice of the random regular bipartite graph, for even $q<\Delta$~\cite{antiferro}.

Intriguing algorithmic results for the non-uniqueness region of $q\ll\Delta$ on random bipartite graphs were devised using the recently introduced polymer method of~\cite{JKP} and~\cite{helmuth2019algorithmic}. Jenssen, Keevash, and Perkins~\cite{JKP} presented an $\fptas$ for almost every regular bipartite graph
when $q\leq C\tfrac{\sqrt{\Delta}}{(\log \Delta)^2}$ for a constant $C>0$ (see also the independent result of Liao, Lin, Lu, and Mao~\cite{liao2019counting}).  The running
time of these algorithms was improved to $O(n^2(\log n)^3)$ in~\cite{chen2019fast} using a randomized method, see Remark~\ref{rem:faster} below.

For a graph $G=(V,E)$ and an integer $q\geq 3$, the partition function $Z_G$ is the number of $q$-colorings of $G$.  An algorithm $\mathcal{A}$ is an $\fpras$ 
for the partition function on  almost all $\Delta$-regular bipartite graphs if, with probability $1-o(1)$ over a graph $G$ chosen u.a.r. from $n$-vertex $\Delta$-regular bipartite graphs, given $G$, an accuracy $\epsilon>0$, and a tolerance $\delta>0$, the algorithm $\mathcal{A}$ produces in time $poly(n,1/\epsilon,\log(1/\delta))$ an estimate $\hat{Z}$ of the partition function $Z_G$ satisfying  $(1-\epsilon)Z_G\leq \hat{Z}\leq (1+\epsilon)Z_G$ with probability $\geq 1-\delta$.  The algorithm is an $\fptas$ if it achieves $\delta=0$.

Here we present an $\fpras$ for $q$-colorings on almost every regular bipartite graph for even $q=O(\tfrac{\Delta}{\log{\Delta}})$. This improves significantly over the best previous known bound of $q=O\big(\tfrac{\sqrt{\Delta}}{(\log\Delta)^2}\big)$ given in \cite{JKP}, and is within only an $O(\log \Delta)$-factor from the uniqueness/hardness threshold. In fact, we also provide strong evidence that this is the limit of the polymer method up to the implicit constants in the given bounds, see the upcoming Lemma~\ref{lem:fail} for details.
\begin{theorem}\label{thm:main1}
For all even $q\geq 4$ and all $\Delta \ge 100 q\log q$, there is an $\fpras$ for the number of $q$-colorings on almost all $\Delta$-regular bipartite graphs.
\end{theorem}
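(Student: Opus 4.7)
The plan is to apply the polymer method framework to the colorings model on a random $\Delta$-regular bipartite graph $G = (V_1 \cup V_2, E)$, refining the analysis of \cite{JKP} by using sharper marginal probability estimates to push the regime up to $q = O(\Delta/\log \Delta)$. Since $q$ is even, the dominant ``phases'' correspond to balanced partitions $(S, [q] \setminus S)$ of the color set with $|S| = q/2$: intuitively, a random coloring typically uses colors from $S$ on $V_1$ and colors from $[q] \setminus S$ on $V_2$, except on small ``defect'' subgraphs. For each phase $\sigma = (S, \bar S)$, define the polymer partition function $Z_\sigma$ by restricting to colorings whose defect structure (the set of vertices colored outside their designated side) forms a collection of sufficiently-separated connected subgraphs in $G$. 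Because $G$ is a bipartite expander with high probability (this is where the random graph assumption enters), every coloring can be uniquely assigned to one dominant phase based on its majority color pattern on each side, yielding $Z_G = \sum_\sigma Z_\sigma \cdot (1+o(1))$ up to a small symmetric overcounting correction and negligible error from ``atypical'' colorings.

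The heart of the argument is showing the Koteck\'y--Preiss condition holds for each $Z_\sigma$ in the regime $\Delta \ge 100 q \log q$. For a polymer supported on a connected vertex set $\gamma$ of size $k$, the polymer weight $w(\gamma)$ is roughly bounded by the ratio of colorings where the vertices of $\gamma$ use ``wrong-side'' colors to the pure ground-state coloring. The key improvement is a bound of the form $|w(\gamma)| \le (q/2)^k (q/2)^{-\Delta k / 2}$-type expression, capturing that each vertex of $\gamma$ has $\Delta/2$ neighbors on the other side, each of which restricts the allowed color choices to approximately $q/2$ out of $q$ when the other side is in the pure phase. To turn this into an exponential decay suitable for cluster expansion, one uses that the number of connected subgraphs of size $k$ containing a fixed vertex in a $\Delta$-regular graph is at most $(e\Delta)^k$; the net weight per polymer then scales like $\bigl(e\Delta \cdot (q/2) \cdot 2^{-\Delta/2}\bigr)^k$, which is exponentially small provided $\Delta \gtrsim 2\log(q\Delta)$, and the condition $\Delta \ge 100 q \log q$ gives ample room. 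The marginal probability bounds (obtained from the typical structure of random bipartite regular graphs, in the spirit of the second-moment analyses of \cite{antiferro}) sharpen these crude counting bounds to remove the extra $\sqrt{\Delta}$ and $\log\Delta$ factors present in \cite{JKP}.

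Given convergence of the cluster expansion, the algorithmic estimation of each $Z_\sigma$ proceeds by either the truncated cluster expansion approach of \cite{JKP,helmuth2019algorithmic} (enumerating clusters up to size $O(\log n)$ to obtain a $(1+\eps)$-multiplicative estimate) or, more efficiently, via the polymer sampling Markov chain of \cite{chen2019fast}; both run in polynomial time once the Koteck\'y--Preiss condition is verified. Summing over the $\binom{q}{q/2}$ phases yields the total estimate, producing the claimed FPRAS.

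The main obstacle I anticipate is \emph{not} the decay of individual polymer weights but rather controlling two subtleties: first, ensuring that the ``ground-state'' normalization inside $Z_\sigma$ can itself be computed/approximated efficiently (this requires that the restricted coloring model on $G$ with colors forced from a fixed $(q/2)$-subset on each side has a tractable partition function, which essentially reduces to permanent-like computations on the bipartite adjacency structure and needs expansion-based concentration); second, verifying that the contribution of colorings \emph{not} captured by any single phase is genuinely negligible in the $q = O(\Delta/\log \Delta)$ regime where phases are much less well separated than at $q = O(\sqrt{\Delta}/(\log\Delta)^2)$. The latter is precisely where the probabilistic input --- typical expansion of random bipartite $\Delta$-regular graphs via Friedman's theorem and its bipartite analogues \cite{friedman2004proof,brito2018spectral} --- must be combined with the polymer weight bounds to guarantee that the total weight of ``large'' defects is subexponentially small, closing the gap between the analytic phase structure and the algorithmic approximation.
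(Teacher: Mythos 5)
Your high-level architecture matches the paper's: balanced bicliques $(S,[q]\setminus S)$ with $|S|=q/2$ as phases, one polymer model per phase, a Koteck\'y--Preiss/polymer-sampling condition, and summation over the $\binom{q}{q/2}$ phases. But there are two genuine gaps, and they sit exactly at the points that determine the threshold $q=O(\Delta/\log\Delta)$.

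First, your polymer weight bound is quantitatively wrong. You claim a per-vertex decay of order $(q/2)\cdot 2^{-\Delta/2}$, so that convergence would only need $\Delta\gtrsim\log(q\Delta)$ --- but that would permit $q$ exponentially large in $\Delta$, contradicting the NP-hardness for even $q<\Delta$. The correct accounting is against the ground-state normalization $(q/2)^{|V_\gamma^+\cap L|}(q/2)^{|V_\gamma^+\cap R|}$: a boundary vertex $u$ adjacent to a defect vertex carrying a wrong-side color loses only \emph{one} of its $q/2$ admissible colors, giving a factor $1-\tfrac{2}{q}$ per boundary vertex, not $\tfrac12$ per neighbor. Combined with vertex expansion $|\partial V_\gamma|\geq\tfrac{\Delta}{7}|V_\gamma|$ for small polymers, the decay is $\emm^{-\Theta(\Delta/q)}$ per polymer vertex, which must beat the $\approx(q\Delta^3)^{|V_\gamma|}$ growth in the number of polymers; this is precisely why the condition is $\Delta\gtrsim q\log(q\Delta)$, hence $\Delta\geq 100q\log q$.

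Second, and more seriously, the step you defer to ``Friedman's theorem and its bipartite analogues'' --- showing that colorings not captured by any single phase contribute negligibly --- is the paper's main new contribution, and the expansion/eigenvalue route you propose is exactly the approach of the prior works that stalls at $q=O(\sqrt{\Delta}/(\log\Delta)^2)$. The paper instead uses the first/second-moment machinery of \cite{antiferro}: every $\eta$-phase vector (color-frequency pair $(\alphab,\betab)$ contributing at least $\emm^{-\eta n}Z_G$) must lie near a Hessian-dominant maximizer of $\Phi_{\BB,\lambdab,\Delta}$, and these maximizers are fixpoints of the tree recursion. The decisive quantitative step is then a fixed-point analysis: writing $h=a/b$ for the maximizer's entry ratio and $t=1-\tfrac{2}{q}$, one has $h=\bigl(\tfrac{h+t}{th+1}\bigr)^{\Delta-1}$ and shows $h\geq\emm^{(\Delta-1)/(2q)}$, whence the off-phase mass is $b'\leq\tfrac{2}{q}\emm^{-(\Delta-1)/(2q)}\leq\tfrac{1}{15\Delta q}$ once $\Delta\geq 8q\log\Delta$. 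Without this (or an equivalent sharp estimate on the tree-recursion fixpoints), the ``phase maximality'' needed to make the polymer decomposition exhaustive is not established in the claimed range, and your argument does not close.
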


We provide analogous results for the hard-core model on weighted independent sets.  For a graph $G=(V,E)$, let $\Omega_G$ denote the collection of 
independent sets of $G$.    For a parameter $\lambda>0$, let independent set $\sigma\in\Omega_G$ have weight $w(\sigma)=\lambda^{|\sigma|}$.  The
partition function for the hard-core model on graph $G$ at fugacity $\lambda$ is defined as $Z_G = \sum_{\sigma\in\Omega} w(\sigma)$ and the
Gibbs distribution is $\mu(\sigma) = w(\sigma)/Z_G$. The hard-core model on the infinite $\Delta$-regular tree undergoes a phase transition corresponding to uniqueness vs. non-uniqueness of the infinite-volume Gibbs measure at $\lambda_c(\Delta) = \tfrac{(\Delta-1)^{\Delta-1}}{(\Delta-2)^\Delta}\sim\tfrac{\eee}{\Delta}$.
For any graph $G$ of maximum degree $\Delta$, for all $\lambda<\lambda_c(\Delta)$, the Glauber dynamics mixes in $O(n\log{n})$ time~\cite{CLV20}.
On the other side, when $\lambda>\lambda_c(\Delta)$, the problem of approximating the partition function is NP-hard on $\Delta$-regular graphs~\cite{Sly,SlySun,GSV-ising}. 

For random $\Delta$-regular bipartite graphs, \cite{JKP} presented an $\fptas$ for $\lambda>50\tfrac{(\log\Delta)^2}{\Delta}$ when $\Delta$ is sufficiently large, and \cite{liao2019counting} for $\lambda\geq 1$ and $\Delta\geq 53$,  see also \cite{Cannon, BR19,liulu} for related results on bipartite graphs. We get an improved range of $\lambda=\Omega(\tfrac{\log{\Delta}}{\Delta})$, which is again within an $O(\log \Delta)$-factor from the uniqueness/hardness threshold.
\begin{theorem}\label{thm:main2}
For all $\Delta\geq 53$ and all $\lambda > 100\frac{\log\Delta}{\Delta}$,
there is an $\fpras$ for the partition function of the
hard-core model with parameter $\lambda$ on almost all $\Delta$-regular bipartite graphs.
\end{theorem}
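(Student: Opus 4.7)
The plan is to follow the polymer/cluster-expansion framework of Jenssen--Keevash--Perkins~\cite{JKP}, pushing it down to $\lambda=\Omega(\log\Delta/\Delta)$ by exploiting the typical expansion of random $\Delta$-regular bipartite graphs more sharply. For $G=(L\cup R,E)$ I would start from the identity
\[
Z_G \;=\; (1+\lambda)^{|L|}\sum_{T\subseteq R}\lambda^{|T|}(1+\lambda)^{-|N(T)|}
\]
and its counterpart swapping $L$ and $R$. Define the ``$L$-phase'' polymer partition function $\Xi_L$ by restricting this sum to subsets $T$ whose connected components in the square graph on $R$ each have size at most a cutoff $k^*=\Theta(n)$; a \emph{polymer} $\gamma\subseteq R$ is such a component, with activity $w(\gamma)=\lambda^{|\gamma|}(1+\lambda)^{-|N(\gamma)|}$, and two polymers are compatible iff they are non-adjacent in the square graph. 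Analogously define $\Xi_R$, so that $Z_G\approx(1+\lambda)^{|L|}\Xi_L+(1+\lambda)^{|R|}\Xi_R$ up to a negligible overlap. To turn this into an $\fpras$ it then suffices to supply two inputs to the black-box Lemma~\ref{lem:main3}: a Koteck\'y--Preiss (KP) condition for each polymer model---which feeds the randomised truncation algorithm of \cite{chen2019fast} to yield an $\fpras$ for $\Xi_L$ and $\Xi_R$---together with a phase-separation bound showing that configurations with a component of size $>k^*$ contribute only an $\exp(-\Omega(n))$ fraction of $Z_G$.

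The crucial input for both ingredients is the typical expansion estimate: with probability $1-o(1)$ over the random $\Delta$-regular bipartite graph, every $T\subseteq R$ with $|T|\leq \alpha|R|$ satisfies $|N(T)|\geq(1-o(1))\Delta|T|$ for a small absolute constant $\alpha>0$. Substituting into the activity gives $w(\gamma)\leq \bigl(\lambda(1+\lambda)^{-(1-o(1))\Delta}\bigr)^{|\gamma|}$, and the hypothesis $\lambda\geq 100\log\Delta/\Delta$ forces $(1+\lambda)^{\Delta}\geq \Delta^{50-o(1)}$ (using $\log(1+\lambda)\geq \lambda/2$ for $\lambda\leq 1$ and direct bounds otherwise); hence $w(\gamma)\leq (\lambda\,\Delta^{-50+o(1)})^{|\gamma|}$. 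Combined with the crude enumeration $(\emm\Delta^2)^{k}$ for connected polymers of size $k$ rooted at a given vertex in the square graph, the KP sum $\sum_{\gamma\ni v}|\gamma|\,\emm^{|\gamma|} w(\gamma)$ converges with enormous slack, in particular for all $\lambda$ up to any fixed polynomial in $\Delta$. The phase-separation bound is proved by a Peierls-style union bound using the same expansion estimate applied to the outer boundary of any hypothetical polymer of size exceeding $k^*$.

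The main obstacle is not the algorithmic chain---which is routine once the bounds are in hand---but establishing the expansion inequality $|N(T)|\geq (1-o(1))\Delta|T|$ uniformly over all $|T|\leq\alpha|R|$ with probability $1-o(1)$ over the configuration model. For $|T|$ polynomially small this is a textbook first-moment computation on the configuration model, but to push the cutoff $k^*$ all the way up to a constant fraction of $n$---which is what delivers the $\exp(-\Omega(n))$ phase-separation error and thus an $\fpras$ for the full partition function rather than just each phase---one must invoke the more delicate second-moment and marginal-probability estimates for random regular bipartite graphs developed in \cite{antiferro,JKP}. Elevating those analytic bounds into the quantitative form needed to verify KP at the sharper threshold $\lambda=\Omega(\log\Delta/\Delta)$ is precisely the content of the model-specific Lemma~\ref{lem:ind}, and is the heart of the contribution.
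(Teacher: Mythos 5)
Your overall architecture is the same as the paper's: a polymer model around each of the two dominant phases, a Koteck\'y--Preiss-type convergence check feeding the Markov-chain/truncation machinery of \cite{chen2019fast}, and a separate ``phase-separation'' estimate showing that everything outside the two polymer models has weight $\emm^{-\Omega(n)}Z_G$. Your hard-core-specific identity $Z_G=(1+\lambda)^{|L|}\sum_T\lambda^{|T|}(1+\lambda)^{-|N(T)|}$ is exactly what the general polymer weight \eqref{eq:wSTg} specializes to for the bicliques $(0,01)$ and $(01,0)$, and you correctly identify that the heart of the matter is Lemma~\ref{lem:ind}. Two points, however, need repair.

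First, the expansion claim you lean on is false as stated: no $\Delta$-regular bipartite graph can satisfy $|N(T)|\geq(1-o(1))\Delta|T|$ for \emph{all} $T$ with $|T|\leq\alpha|R|$ when $\alpha$ is an absolute constant, since $|N(T)|\leq n$ forces expansion at most $1/\alpha$ once $|T|=\alpha n$; and even for sets of size $\Theta(n/\Delta)$ a uniform near-$\Delta$ expansion fails. The paper's Lemma~\ref{lem:expansion} (via Bassalygo \cite{bassalygo1981}) only guarantees $|\partial U|\geq\tfrac{\Delta}{7}|U|$ for sets of size at most $\tfrac{n}{6\Delta}$, and correspondingly the polymer-size cutoff is $\tfrac{n}{6\Delta}$, not $\Theta(n)$. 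Your KP verification survives this weakening --- $(1+\lambda)^{\Delta/7}\geq\Delta^{7}$ at $\lambda\geq 100\log\Delta/\Delta$ still beats the $(\emm\Delta^2)^{|\gamma|}$ enumeration, which is the computation behind the hypothesis $\Delta(1-\delta)\min(\lambdab)\geq 7q(5+\log\tfrac{(q-1)\Delta^3}{\min(\lambdab)})$ of Lemma~\ref{lem:main3} --- but the constants must be redone with the correct expansion. Second, and more importantly, the Peierls-style union bound you propose for the phase-separation step is precisely the approach that does \emph{not} reach $\lambda=\Omega(\log\Delta/\Delta)$; it is why \cite{JKP} needed $\lambda\geq 50(\log\Delta)^2/\Delta$. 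The paper replaces it entirely: maximality (Lemma~\ref{lem:ind}) is proved by analyzing the fixpoints $x=\lambda/(1+y)^{\Delta-1}$, $y=\lambda/(1+x)^{\Delta-1}$ of the tree recursion, showing the minority-side occupation ratio satisfies $x\leq\tfrac{1}{30\lambda\Delta^2}$, and invoking the Hessian-dominance/second-moment machinery of \cite{antiferro} (Lemma~\ref{lem:hessiandominant} and Corollary~\ref{lem:phases}) to conclude that every $\eta$-phase vector is within $\tfrac{1}{24\Delta}$ of a maximal biclique. Your closing paragraph gestures at exactly this, so the gap is one of method for the key sub-lemma rather than of overall strategy; but as written, the proof of the phase-separation bound is missing, and the Peierls route you name would not close it. One last small point: the paper handles $\lambda\geq 1$ by citing the existing $\fpras$ of \cite{liao2019counting} rather than rerunning the argument there.
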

\begin{remark}\label{rem:faster}
In Theorems~\ref{thm:main1} and~\ref{thm:main2}, we can also obtain deterministic approximation schemes ($\fptas$) by applying the interpolation method, analogously to \cite{JKP}. Here, we follow the Markov-chain framework of \cite{chen2019fast}, which provides substantially stronger running time guarantees than those we state for convenience here. In particular, the FPRASes  in Theorems~\ref{thm:main1} and~\ref{thm:main2} run in time $O\big((\tfrac{n}{\epsilon})^2\log^3(\tfrac{n}{\epsilon})\big)$ when the desired accuracy error is not exponentially small (i.e., $\epsilon\geq \emm^{-\Omega(n)}$). Moreover, in the same range of the parameters, we obtain in addition approximate samplers from the Gibbs distribution with analogous running-time guarantees.
\end{remark}

We remark that the condition that $q$ is even in Theorem~\ref{thm:main1} is for the same technical reasons that the hardness results of \cite{antiferro} were obtained for $q$ even which we stated earlier; we conjecture that the result can be extended to odd $q$ and our proof approach extends verbatim (once one has the analogue of the upcoming Lemma~\ref{lem:coloring}). 

In fact, Theorems~\ref{thm:main1} and~\ref{thm:main2} will be proved as special cases of a general algorithmic result that applies to arbitrary spin systems on random bipartite regular graphs. We first introduce general spin systems following the framework of \cite{biclique}. Note that the techniques in \cite{biclique} were targeted to obtain bounds for general spin system and do not yield tight results, e.g., for  colorings the bound obtained therein is roughly $q=O(\Delta^{1/4})$, cf. with the bound on $q$ in Theorem~\ref{thm:main1}. Also, to obtain the result of the hard-core model in Theorem~\ref{thm:main2} we will also need to explicitly account for the presence of external fields, as detailed in the next section.

\section{Proof Outline}
\subsection{Preliminaries: general spin systems and bicliques}

Let $q\geq 2$ be an integer. A general $q$-spin system $(\BB,\lambdab)$ consists of a symmetric interaction matrix $\BB=\{B_{ij}\}_{i,j\in [q]}$, whose entries are between 0 and 1, and an activity vector $\lambdab=\{\lambda_i\}_{i\in [q]}$ with strictly positive entries which are $\leq 1$. Note, that up to normalising, we may  assume that $\BB$ and $\lambdab$ have each at least one entry equal to 1.

For a graph $G=(V,E)$, an assignment $\sigma:V\rightarrow[q]$ has weight $w_G(\sigma)=\prod_{(u,v)\in E}B_{\sigma(u),\sigma(v)}$. The Gibbs distribution is given by $\mu_G(\sigma)=w_G(\sigma)/Z_G$, where $Z_G=\sum_{\sigma:V\rightarrow[q]}w_G(\sigma)$ is the partition function. We let $\Sigma_G$ be the set of all spin assignments. 
For a spin system on a bipartite graph $G$, the following notion of \emph{bicliques} is relevant \cite{biclique,JKP, galanis2016approximately, sampling}. 
\begin{definition}[Biclique]
For a $q$-spin system with interaction matrix $\BB$, we say that a pair $(S,T)$ where $S,T \subseteq [q]$ is a \emph{biclique} if $B_{ij} = 1$ for all $i \in S, j\in T$. A biclique $(S,T)$ is \emph{maximal} if there is \emph{no} other biclique $(S',T')\neq (S,T)$ satisfying $S \subseteq S' \subseteq [q]$ and $T \subseteq T' \subseteq [q]$. 
\end{definition}
\noindent Note that bicliques are defined using only the interaction matrix $\BB$ and do not depend on $\lambdab$.

\begin{example}\label{example}
For the $q$-colorings model, we have that $\BB$ is the $q\times q$ matrix with all ones except on the diagonal where the entries are zero (and $\lambdab$ is the all-ones vector). The bicliques $(S,T)$ are given by pairs of disjoint sets $S,T\subseteq [q]$, whereas maximal bicliques by pairs of $S,T\subseteq[1]$ that form a partition of $[q]$.   For the hard-core model, we have $\BB=\big[\begin{smallmatrix} 1&1\\1&0\end{smallmatrix}\big]$ and $\lambdab=\big[\begin{smallmatrix} 1\\ \lambda\end{smallmatrix}\big]$. Indexing the rows/columns of $\BB$ with $\{0,1\}$ (instead of $\{1,2\}$), the bicliques are $\{(0, 0), (0, 1), (1, 0), (0, 01), (01, 0)\}$ and the maximal bicliques are $\{(0,01), (01,0)\}$. 
\end{example}

\subsection{Our approach: phase vectors and phase maximality}
\label{subsec:our_approach}
Let $(\BB,\lambdab)$ be an arbitrary spin system and $G$  be a $\Delta$-regular bipartite graph, whose vertex set $V$ is partitioned as $(L,R)$ with $|L|=|R|=n$. 
Our approach to obtain approximation algorithms is to consider the likely frequencies of the spins on each side of the graph in the Gibbs distribution of $G$. Adapting methods from \cite{biclique, JKP}, we show that we can obtain efficient approximation schemes for those spin systems where the ``likely'' frequency vectors are captured by maximal bicliques $(S,T)$, see the upcoming Definition~\ref{def:maximality}. The main new ingredient in our work is to give a tight method to study when this  condition is satisfied for general spin systems, which ultimately yields Theorems~\ref{thm:main1} and~\ref{thm:main2} as special cases.

To formalise the above,  for $q$-dimensional probability vectors $\alphab=\{\alpha_i\}_{i\in [q]},\betab=\{\beta_i\}_{i\in [q]}$, we let 
\begin{equation}\label{eq:sigmaalphabeta}
\Sigma^{\alphab,\betab}_G=\{\sigma:V\rightarrow[q]\, \big|\, |\sigma^{-1}(i)\cap L|= n\alpha_i, |\sigma^{-1}(i)\cap R|= n\beta_i\}
\end{equation}
be the set of spin assignments where exactly $n\alpha_i,n\beta_i$ vertices are assigned the spin $i\in [q]$ on $L,R$, respectively.  Denote by $Z_G^{\alphab,\betab}$ the contribution to the partition function from configurations in $\Sigma^{\alphab,\betab}_G$, i.e., $Z_G^{\alphab,\betab}=\sum_{\sigma\in \Sigma^{\alphab,\betab}_G} w_G(\sigma)$.  We will be interested in those pairs $(\alphab,\betab)$ that contribute significantly to the partition function, as detailed below.
\begin{definition}[Phase vectors]
Let $\eta>0$. For a $q$-spin system on an $n$-vertex regular bipartite graph $G$, we say that a pair $(\alphab,\betab)$ of $q$-dimensional probability vectors  is an $\eta$-phase vector of $G$ if $Z_G^{\alphab,\betab}/Z_G\geq \emm^{-\eta n}$. 
\end{definition}

Understanding the phase vectors is in general a hard task. For random bipartite regular graphs,  these have been identified to lie among the set of fixpoints $(\rb,\cb)$ of the following so-called tree recursions on the $\Delta$-regular tree~\cite{antiferro}:
\begin{equation}\label{eq:tr}
r_i \propto \lambda_i \left(\mbox{$\sum_{j\in [q]}$} B_{ij} c_j\right)^{\Delta-1} \mbox{ for $i \in [q]$};
\qquad
c_j \propto \lambda_j \left(\mbox{$\sum_{i\in [q]}$} B_{ij} r_i\right)^{\Delta-1} \mbox{ for $j \in [q]$}.
\end{equation}
The underpinning principle here leading to this correspondence is that the neighbourhood structure of a random $\Delta$-regular bipartite graph is similar to a $\Delta$-regular tree. Nevertheless, identifying the actual phase vectors, even among the finite-set of fixpoints in \eqref{eq:tr}, has turned out to be rather challenging. Even in the canonical case of $q$-colorings, the current best known analysis works for even $q$ and is a result of technically intense arguments.

Before looking into this in more detail, we first explain how to convert the information about phase vectors into algorithms.  Adapting methods from \cite{biclique, JKP}, we show that this is feasible when the phase vectors correspond to maximal bicliques. More precisely, for a non-empty set $S \subseteq [q]$, define the $q$-dimensional probability vector $\gb_S$  whose $i$-th entry is given by $\frac{\lambda_i}{\sum_{j \in S} \lambda_j}$ for $i \in S$, and 0 otherwise. The following notion of ``phase maximality'' will be important in what follows.

\begin{definition}[Phase Maximality]\label{def:maximality}
Let $(\BB,\lambdab)$ be a $q$-spin system and $\Delta\geq 3$. For $\rho>0$ and a set of maximal bicliques $\Bc_\Delta$, we say that the spin system is $\rho$-maximal with respect to $\Bc_\Delta$  if  there is $\eta>0$ such that,  for almost all $\Delta$-regular bipartite graphs, every $\eta$-phase vector $(\alphab,\betab)$ satisfies $\norm{(\alphab,\betab)-(\gb_S,\gb_T)}_\infty \le \rho$ for some maximal biclique $(S,T)\in \Bc_\Delta$. 
\end{definition}



The key new ingredient to prove Theorems~\ref{thm:main1} and \ref{thm:main2} is to establish maximality for the colorings and hard-core models in the corresponding parameter regimes, as detailed in the following theorems.

\newcommand{\statelemmacoloring}{For even $q\geq 4$  and $\Delta \ge 8 q \log \Delta$, the $q$-colorings model is $\tfrac{1}{12\Delta q}$-maximal with respect to the set of bicliques $\Bc_\Delta=\{(S,[q]\backslash S) \, \big| \, |S|=\frac{q}{2}\}$.}
\begin{lemma}\label{lem:coloring}
\statelemmacoloring
\end{lemma}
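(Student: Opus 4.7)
The plan is to reduce the lemma to a quantitative stability analysis of the first-moment free-energy functional
$$f(\alphab,\betab)\;=\;H(\alphab)+H(\betab)+\Delta\log\big(1-\ip{\alphab}{\betab}\big),$$
which governs the exponential rate $\E[Z_G^{\alphab,\betab}]=\exp(f(\alphab,\betab)\,n+o(n))$ via a standard configuration-model enumeration on random $\Delta$-regular bipartite graphs. Let $f^*=\max f$; at the biclique vectors $(\gb_S,\gb_T)$ with $|S|=|T|=q/2$ and $S\cup T=[q]$ one has $f(\gb_S,\gb_T)=2\log(q/2)$, and the goal of the stability step below is to show that this is the unique (up to permutation) maximizer and that the functional is strongly concave around it.

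First I would promote these functional bounds into statements about phase vectors via a two-sided moment argument. Markov's inequality applied to each $Z_G^{\alphab,\betab}$, together with a union bound over the $n^{O(q)}$ possible types, shows that with probability $1-o(1)$ every type satisfies $Z_G^{\alphab,\betab}\le \exp\!\big((f(\alphab,\betab)+\eta/2)n\big)$. For the matching lower bound on $Z_G$ I would invoke the second-moment computation for the even-$q$ coloring model carried out in~\cite{antiferro}, which yields $Z_G\ge \exp\!\big((f^*-\eta/2)n\big)$ on almost every random $\Delta$-regular bipartite graph. Combining these, every $\eta$-phase vector $(\alphab,\betab)$ must satisfy $f(\alphab,\betab)\ge f^*-2\eta$, and it remains to show that this forces $(\alphab,\betab)$ to be within $\tfrac{1}{12\Delta q}$ in $\ell_\infty$ of some $(\gb_S,\gb_T)\in\Bc_\Delta$.

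The core technical task, and the step I expect to be hardest, is this stability of $f$ near its maximizers, which I would carry out in three sub-steps. First, since $H(\alphab)+H(\betab)\le 2\log q$, near-optimality forces $\Delta\log\!\big(1-\ip{\alphab}{\betab}\big)\ge -2\log 2-2\eta$, hence $\ip{\alphab}{\betab}=O(1/\Delta)$; defining effective supports $S^\alpha=\{i:\alpha_i\ge\tau\}$, $T^\beta=\{j:\beta_j\ge\tau\}$ at a threshold $\tau=\Theta(1/q)$, the hypothesis $\Delta\ge 8q\log\Delta$ forces $S^\alpha\cap T^\beta=\emptyset$ and localizes all but an $o(1)$ fraction of each vector inside its effective support. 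Second, I would combine $H(\alphab)\le\log|S^\alpha|+o(1)$ with the concavity inequality $\log a+\log b\le 2\log\tfrac{a+b}{2}$ applied to $|S^\alpha|+|T^\beta|\le q$ to enforce $|S^\alpha|=|T^\beta|=q/2$; this is exactly where evenness of $q$ is essential, since for odd $q$ no balanced split exists and the available entropy slack cannot cleanly be separated from the $\Delta\log(1-\ip{\alphab}{\betab})$ term. Third, once the supports are locked as a partition $(S,T)$, the Hessian of $f$ at $(\gb_S,\gb_T)$ restricted to this simplex is strongly negative-definite with curvature $\Omega(q)$ coming from the entropy terms, so a second-order Taylor expansion yields $\norm{\alphab-\gb_S}_\infty^2+\norm{\betab-\gb_T}_\infty^2=O(\eta/q)$, which is $\ll \bigl(\tfrac{1}{12\Delta q}\bigr)^2$ once $\eta$ is chosen as a small enough polynomial in $1/(\Delta q)$.

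The most delicate ingredient in the above is the second sub-step, where I need not only to rule out imbalanced support sizes but also to discard intermediate patterns in which some coordinates sit strictly between $\tau$ and $2/q$. I would handle this via a Lagrangian/KKT argument: at any near-stationary $(\alphab,\betab)$ the coordinates of $\alphab$ inside its effective support must satisfy the tree-recursion fixpoint equation~\eqref{eq:tr} up to $O(1/\Delta)$, and in the coloring case (where $\lambdab\equiv\onesb$ and the restriction of $\BB$ to $S^\alpha\times T^\beta$ is all-ones) this forces them to be near-uniform, matching $\gb_{S^\alpha}$. Assembling the three sub-steps and choosing $\eta$ as a small polynomial in $1/(\Delta q)$, the union bound in the moment reduction still succeeds since $\eta n\gg q\log n$, and the $\tfrac{1}{12\Delta q}$-maximality with respect to $\Bc_\Delta$ follows.
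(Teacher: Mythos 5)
Your high-level architecture --- a per-type first-moment upper bound, a second-moment lower bound imported from \cite{antiferro}, and then a stability analysis of the rate function near its maximizers --- is essentially the paper's (the paper packages the two moment steps into Lemma~\ref{lem:hessiandominant} and Corollary~\ref{lem:phases}, and reduces everything to a quantitative analysis of the maximizers). Two issues, one minor and one fatal. The minor one: $H(\alphab)+H(\betab)+\Delta\log(1-\ip{\alphab}{\betab})$ is the annealed exponent for a \emph{binomial} bipartite graph, not for the configuration model; for random $\Delta$-regular bipartite graphs the interaction term is $\Delta$ times an entropy maximization over edge-profile matrices with marginals $\alphab,\betab$ supported off the diagonal (this is why the paper works with $\Phi_{\BB,\lambdab,\Delta}$ and the tree recursion \eqref{eq:tr}). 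This is fixable but it shifts the location of the maximizers, which matters below.

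The fatal issue is your third sub-step. The biclique vectors $(\gb_S,\gb_T)$ are \emph{not} maximizers of the rate function, nor even stationary points: the entropy gradient blows up on the boundary of the simplex, so the true maximizers are interior, two-valued vectors with small entries $b'>0$ (Lemma~\ref{lem:tree-dphase}). Consequently a second-order Taylor expansion around $(\gb_S,\gb_T)$ cannot yield $\norm{\alphab-\gb_S}_\infty^2+\norm{\betab-\gb_T}_\infty^2=O(\eta/q)$: the dominant phase itself is an $\eta$-phase vector for \emph{every} $\eta>0$, yet it sits at $\ell_\infty$-distance exactly $b'$ from $(\gb_S,\gb_T)$, a quantity independent of $\eta$. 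The entire content of the lemma is the bound $b'\le\tfrac{1}{15\Delta q}$, and this is precisely where the hypothesis $\Delta\ge 8q\log\Delta$ must enter; the paper obtains it by analyzing the scalar fixpoint equation $h=\big(\tfrac{h+t}{th+1}\big)^{\Delta-1}$ for the ratio $h=a/b$, proving $h>\emm^{(\Delta-1)/(2q)}$ and hence $b'<\tfrac{2}{q}\emm^{-(\Delta-1)/(2q)}$. Your proposal uses the hypothesis only softly (to separate the effective supports), and your KKT sub-step says the coordinates ``match $\gb_{S^\alpha}$'' without ever computing the size of the off-support coordinates, which is the whole point. That this cannot be patched by soft arguments is shown by Lemma~\ref{lem:fail}: for $\Delta=O(q\log q)$ the $O(\tfrac{1}{\Delta q})$-maximality genuinely fails, so any correct proof must perform (some version of) this quantitative fixpoint estimate.
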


\newcommand{\statelemmaind}{For $\Delta\geq 50$ and $\lambda \ge \tfrac{50}{\Delta}$, the hard-core model with fugacity $\lambda$ is $\tfrac{1}{24\Delta}$-maximal with respect to the set of bicliques $\Bc_\Delta=\{(0,01),(01,0)\}$.}
\begin{lemma}\label{lem:ind}
\statelemmaind 
\end{lemma}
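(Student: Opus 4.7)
The plan is a first-moment / union-bound argument on random $\Delta$-regular bipartite graphs. Writing a phase vector with scalar densities $\alpha,\beta\in[0,1]$ of spin $1$ on the two sides (so $\alphab=(1-\alpha,\alpha)$ and $\betab=(1-\beta,\beta)$), the configuration model and Stirling's formula give
\begin{equation*}
\Psi(\alpha,\beta)\ :=\ \lim_{n\to\infty}\tfrac{1}{n}\log\E\bigl[Z_G^{\alphab,\betab}\bigr]\ =\ H(\alpha)+H(\beta)+(\alpha+\beta)\log\lambda+\Delta\bigl[(1-\alpha)\log(1-\alpha)+(1-\beta)\log(1-\beta)-(1-\alpha-\beta)\log(1-\alpha-\beta)\bigr],
\end{equation*}
where $H$ is the binary entropy. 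The trivial deterministic lower bound $Z_G\ge(1+\lambda)^n$, obtained by assigning every vertex of $L$ to spin $0$ and summing freely over independent sets on $R$, gives $\log Z_G/n\ge \log(1+\lambda)=\Psi(0,\tfrac{\lambda}{1+\lambda})$. Therefore $\tfrac{1}{24\Delta}$-maximality reduces to producing a quantitative gap: there exists $\eta>0$ such that $\Psi(\alpha,\beta)\le\log(1+\lambda)-2\eta$ for every $(\alpha,\beta)$ at $\ell_\infty$-distance at least $\tfrac{1}{24\Delta}$ from both biclique vectors $(\gb_{\{0\}},\gb_{\{0,1\}})=(0,\tfrac{\lambda}{1+\lambda})$ and its swap $(\tfrac{\lambda}{1+\lambda},0)$.

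To obtain the gap I would first locate the interior stationary points of $\Psi$, which (asymptotically) correspond to fixpoints of the tree recursion~\eqref{eq:tr}; for the hard-core model these reduce to $\tfrac{\alpha}{1-\alpha}=\lambda(1-\beta)^{\Delta-1}$ and its symmetric counterpart. In the non-uniqueness regime $\lambda\ge 50/\Delta\gg\lambda_c\sim \eee/\Delta$, there are exactly three solutions in $[0,1]^2$: a symmetric saddle $(\alpha_c,\alpha_c)$ and two asymmetric global maxima $(\alpha_-,\alpha_+),(\alpha_+,\alpha_-)$ satisfying $\alpha_-\le\lambda(1+\lambda)^{-(\Delta-1)}\le \eee^{-48}/\Delta$ and $\alpha_+=\tfrac{\lambda}{1+\lambda}-O(\Delta\alpha_-)$; these true maxima are thus $o(1/\Delta)$-close to the biclique vectors, so any $(\alpha,\beta)$ that is $\tfrac{1}{24\Delta}$-far from both bicliques is also $\Omega(1/\Delta)$-far from both asymmetric maxima. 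A second-order Taylor expansion around a maximum then supplies the drop: the Hessian in the direction of the displaced coordinate is at most $-1/\bigl(\beta_\star(1-\beta_\star)\bigr)\asymp -(1+\lambda)^2/\lambda$, so a $\tfrac{1}{24\Delta}$-displacement reduces $\Psi$ by at least a positive quantity of order $(1+\lambda)^2/(\lambda\Delta^2)$; in the complementary ``interior'' region (both coordinates bounded away from any biclique support), concavity of $\Psi$ along one-dimensional slices together with $\Psi(\alpha_c,\alpha_c)<\log(1+\lambda)$ yields a constant drop. This produces an explicit $\eta=\eta(\lambda,\Delta)>0$.

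Given the gap, the conclusion is routine. Markov's inequality on each $Z_G^{\alphab,\betab}$ together with a union bound over the $O(n^2)$ admissible density pairs shows that, with probability $1-o(1)$ over the choice of the random graph, $Z_G^{\alphab,\betab}\le \eee^{n(\Psi(\alpha,\beta)+\eta/2)}\le \eee^{n(\log(1+\lambda)-3\eta/2)}$ simultaneously for all ``bad'' $(\alphab,\betab)$. Dividing by the deterministic $Z_G\ge \eee^{n\log(1+\lambda)}$ gives $Z_G^{\alphab,\betab}/Z_G\le \eee^{-\eta n}$, establishing Definition~\ref{def:maximality} with $\rho=\tfrac{1}{24\Delta}$.

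The principal technical obstacle is the quantitative gap itself, particularly in the near-biclique regime where $\alpha$ is already close to $0$ but $\beta$ has been displaced by exactly $\tfrac{1}{24\Delta}$ from $\tfrac{\lambda}{1+\lambda}$. The subtlety is that $\partial_\alpha\Psi$ blows up at $\alpha=0$ (from the entropy term), so the quadratic expansion around the true interior maximum $(\alpha_-,\alpha_+)$ must balance the entropic and graph-theoretic contributions carefully. The parameter assumption $\lambda\ge 50/\Delta$ enters precisely here: it places the asymmetric fixpoints exponentially close to the bicliques and keeps the Hessian analysis clean.
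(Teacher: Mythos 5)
Your route is genuinely different from the paper's in the probabilistic step. The paper never touches the first-moment function $\Psi$ directly: it invokes Corollary~\ref{lem:phases} (hence Lemma~\ref{lem:hessiandominant}, i.e.\ the first-{\em and} second-moment machinery of \cite{antiferro}, which requires Hessian dominance of the maximizers of $\Phi_{\BB,\lambdab,\Delta}$), and its only new work is a clean fixpoint computation in the ratio variables $x=r_1^*/r_0^*$, $y=c_1^*/c_0^*$, using monotonicity of $f(t)=(1+t)^{\Delta-1}/t$ to get $x\le\frac{1}{30\lambda\Delta^2}$ and hence $\|(\alphab^*,\betab^*)-(\gb_0,\gb_{01})\|_\infty\le\frac{1}{30\Delta}$. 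Your proposal replaces the second-moment input by the deterministic bound $Z_G\ge(1+\lambda)^n$ plus Markov and a union bound; if it were carried out it would be more self-contained (no Hessian dominance needed), at the price of having to control $\Psi$ \emph{globally} rather than merely locating its stationary points.

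That global control is exactly where the gap is: the statement ``$\Psi(\alpha,\beta)\le\log(1+\lambda)-2\eta$ whenever $(\alphab,\betab)$ is $\frac{1}{24\Delta}$-far from both bicliques'' \emph{is} the lemma, and you only sketch it. Concretely: (a) your bound $\alpha_-\le\lambda(1+\lambda)^{-(\Delta-1)}$ points the wrong way --- since $\alpha_+<\frac{\lambda}{1+\lambda}$, the quantity $\lambda(1-\alpha_+)^{\Delta-1}=\frac{\alpha_-}{1-\alpha_-}$ is bounded \emph{below} by $\lambda(1+\lambda)^{-(\Delta-1)}$; an upper bound on $\alpha_-$ requires a lower bound on $\alpha_+$, which in turn needs an upper bound on $\alpha_-$, and breaking this circularity is precisely what the paper's monotonicity argument for $f$ (giving $x<\frac{1}{\Delta-2}<y$, then $y\ge\lambda/3$) accomplishes. (b) From $\alpha_+=\frac{\lambda}{1+\lambda}-O(\Delta\alpha_-)$ you conclude the maximum is ``$o(1/\Delta)$-close'' to the biclique, but $\Delta\alpha_-$ is of constant order in $\Delta$ under your estimate, not $o(1/\Delta)$; the displacement in the $\beta$-coordinate is really $O\big(\lambda\Delta\alpha_-/(1+\lambda)^2\big)$, and one needs the sharper $x\le\frac{1}{30\lambda\Delta^2}$ scale to land strictly inside the $\frac{1}{24\Delta}$-ball. (c) $\Psi$ has two global maxima, so it is not concave; it is only coordinate-wise concave ($\partial^2_\beta\Psi\le-\frac{1}{\beta(1-\beta)}$), and the ``interior'' case must be handled by reducing to the one-dimensional function $g(\alpha)=\max_\beta\Psi(\alpha,\beta)$ and analyzing its shape, which is a nontrivial piece of analysis you have not supplied. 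None of these obstacles looks fatal, but as written the argument establishes only the easy reduction, not the estimate that carries the lemma.
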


Previous approaches in \cite{JKP,biclique,liao2019counting}  to establish the analogues of Lemmas~\ref{lem:coloring} and~\ref{lem:ind} used expansion properties of random $\Delta$-regular bipartite graphs which do not however give tight results in terms of the range of the parameters that they apply. Instead, we follow a more direct analytical approach, using the tree-recursions view mentioned in \eqref{eq:tr}, further details are given in Section~\ref{sec:phases} with the final technical bounds obtained in Section~\ref{sec:maximal}. These more precise bounds allow us to push significantly further the applicability of the polymer method, see also the beginning of Section~\ref{sec:alg} for further explanation.

Indeed, we show that  $\rho$-maximality yields approximation algorithms on random $\Delta$-regular bipartite graphs, provided that $\rho$ is sufficiently small and that the weight of configurations corresponding to maximal bicliques is sufficiently big relatively to other type of configurations. To capture the latter condition, recall that the entries of $\BB,\lambdab$ are between 0 and 1, and each of them includes at least one entry equal to 1.  We say that $\BB$ is a $\delta$-matrix for some $\delta\in [0,1)$ if the second largest entry of $\BB$ is $\leq \delta$, and we denote by $\min(\lambdab)$ the minimum entry in $\lambdab$ (note that this is strictly bigger than 0). By applying the polymer method appropriately (inspired by \cite{biclique}), we show the following in Section~\ref{sec:alg}.

\newcommand{\statelemmamainthree}{Let $(\BB,\lambdab)$ be a $q$-spin system, $\Delta \geq 3$ be an integer, and $\rho=\tfrac{1}{12\Delta q}$. Suppose further that $\BB$ is a $\delta$-matrix for some $\delta\in [0,1)$ and that $\Delta(1-\delta)\min(\lambdab)\geq 7q\big(5+\log \tfrac{(q-1)\Delta^3}{\min(\lambdab)}\big)$.

If the spin system is $\rho$-maximal, then there is an $\fpras$ for the partition function for almost all $\Delta$-regular bipartite graphs. In fact, for almost all $\Delta$-regular bipartite graphs, for $\epsilon=\exp(-\Omega(n))$, the algorithm produces an $\epsilon$-estimate for the partition function and an $\epsilon$-sample from the Gibbs distribution in time $O\big((\tfrac{n}{\epsilon})^2(\log\tfrac{n}{\epsilon})^3\big)$.}
\begin{lemma}\label{lem:main3}
\statelemmamainthree
\end{lemma}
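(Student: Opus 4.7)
The plan is to apply the abstract polymer method in the spirit of \cite{biclique, JKP, chen2019fast}, leveraging $\rho$-maximality to reduce the computation of $Z_G$ to a sum of tractable polymer partition functions, one per maximal biclique. Fix $G$ a typical $\Delta$-regular bipartite graph with bipartition $(L,R)$. For each maximal biclique $(S,T)\in \Bc_\Delta$, declare a vertex $v$ to be \emph{defective} if $v\in L$ with $\sigma(v)\notin S$, or $v\in R$ with $\sigma(v)\notin T$. A polymer is a connected set (under the $(\leq 2)$-distance graph on $V$) of defective vertices together with an assignment of spins to them. The weight $w(\gamma)$ of a polymer $\gamma$ is taken relative to the ``ground state'' Gibbs weight in which defective vertices are replaced by spins drawn proportionally to $\lambdab$ from $S$ (on $L$) or from $T$ (on $R$); this way, the partition function $Z_G^{(S,T)}$ obtained by restricting to configurations close to $(\gb_S,\gb_T)$ factorises as a ground-state weight times a polymer partition function $\Xi_{(S,T)}$.

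The first main step is to verify the Kotecký–Preiss condition for each $\Xi_{(S,T)}$. Each edge of a polymer $\gamma$ of size $k$ contributes at most $\delta$ (since $\BB$ is a $\delta$-matrix and each defective spin forms an interaction that is \emph{not} in the all-ones biclique with a neighbour in the ground state), and each vertex of $\gamma$ contributes at most a factor $1/\min(\lambdab)$ due to the activity renormalisation. Combined with at most $q-1$ choices of non-ground-state spin per vertex, we get $w(\gamma)\leq \bigl((q-1)\delta^{\Delta/2}/\min(\lambdab)\bigr)^{k}$ times a polynomial correction. The number of size-$k$ polymers containing a given vertex is at most $(\eee \Delta^2)^{k}$ by the standard tree-count bound for subgraphs. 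The hypothesis $\Delta(1-\delta)\min(\lambdab)\geq 7q\bigl(5+\log\tfrac{(q-1)\Delta^3}{\min(\lambdab)}\bigr)$ is exactly what is needed so that $\sum_\gamma |w(\gamma)| \eee^{|\gamma|} \leq |\gamma|/2$ for any containing vertex, i.e.\ the Kotecký–Preiss criterion holds with a comfortable exponential slack.

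Given KP convergence, the second step is algorithmic: for each $(S,T)\in \Bc_\Delta$, we  approximate $\Xi_{(S,T)}$ and sample from it using the polymer Markov chain of \cite{chen2019fast}, which runs in time $O\bigl((\tfrac{n}{\epsilon})^2(\log\tfrac{n}{\epsilon})^3\bigr)$ for $\epsilon \geq \eee^{-\Omega(n)}$. Multiplying by the (explicitly computable) ground-state weight yields an $\epsilon$-estimator for each $Z_G^{(S,T)}$. Summing over the (constantly many) maximal bicliques $(S,T)\in \Bc_\Delta$, and sampling one summand with probability proportional to its estimate, we obtain the claimed FPRAS and approximate sampler.

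The third step, which closes the argument, uses $\rho$-maximality to show that $\sum_{(S,T)\in \Bc_\Delta} Z_G^{(S,T)}$ approximates $Z_G$ to within a multiplicative $1+\eee^{-\Omega(n)}$ factor. Indeed, configurations outside $\bigcup_{(S,T)} \Sigma^{\alphab,\betab}_G$ for $(\alphab,\betab)$ within $\rho$ of some $(\gb_S,\gb_T)$ correspond, by definition, to non-$\eta$-phase-vectors, so their total contribution is at most $\eee^{-\eta n/2}\cdot Z_G$ once the $n^{O(q)}$ possible $(\alphab,\betab)$'s are absorbed (here one uses $\rho=\tfrac{1}{12\Delta q}$ to ensure that the ``$\rho$-neighborhoods'' of distinct maximal bicliques are disjoint, so no double-counting occurs). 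The main obstacle is the  bookkeeping in the Kotecký–Preiss verification: the activity factors $1/\min(\lambdab)$ compete with the per-edge decay $\delta^{\Theta(\Delta)}$, and it is precisely the quantitative condition on $\Delta(1-\delta)\min(\lambdab)$ that balances these two effects and yields the logarithmic factor in the statement; all other steps follow the blueprint of \cite{biclique, chen2019fast} with routine modifications to handle the external-field factors $\lambda_i$.
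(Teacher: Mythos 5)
Your overall architecture (polymer model per maximal biclique, convergence condition, Markov-chain sampler of \cite{chen2019fast}, and a final comparison of $\sum_{(S,T)}Z_G^{(S,T)}$ with $Z_G$ via maximality) matches the paper's. However, your verification of the convergence condition contains a genuine gap: you claim each \emph{internal} edge of a polymer contributes a factor $\delta$, yielding $w(\gamma)\leq\bigl((q-1)\delta^{\Delta/2}/\min(\lambdab)\bigr)^{k}$. This is false. An edge between two defective vertices need not be penalized at all — e.g.\ for colorings a defective $u\in L$ carries a spin in $[q]\setminus S=T$ and a defective neighbour $v\in R$ a spin in $S$, so $B_{\sigma(u)\sigma(v)}=1$. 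Worse, in both applications $\delta=0$, so your bound would assign weight $0$ to every polymer containing an edge, which is absurd. The actual source of decay is entirely different: it sits at the \emph{boundary} vertices $u\in\partial V_\gamma$, where maximality of the biclique forces at least one ground-state spin $i$ to have $B_{i,\sigma_\gamma(v)}\leq\delta$ for some polymer neighbour $v$, so that $F_u\leq\sum_{i\in S}\lambda_i-(1-\delta)\min(\lambdab)$. This gives only a weak per-boundary-vertex factor $1-\tfrac{(1-\delta)\min(\lambdab)}{q}$, and the exponential decay with rate $\Omega\bigl(\tfrac{\Delta(1-\delta)\min(\lambdab)}{q}\bigr)$ per polymer vertex is obtained only because random $\Delta$-regular bipartite graphs satisfy the vertex-expansion bound $|\partial V_\gamma|\geq\tfrac{\Delta}{7}|V_\gamma|$ for sets of size at most $\tfrac{n}{6\Delta}$ (Lemma~\ref{lem:expansion}). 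You never invoke expansion, yet it is the crux: without it there is no factor of $\Delta$ in the decay rate and the hypothesis $\Delta(1-\delta)\min(\lambdab)\geq 7q(\cdots)$ cannot be matched. This is also why polymers must be truncated at size $2q\rho n=\tfrac{n}{6\Delta}$, a restriction your setup omits.

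Two smaller but real issues in your third step: disjointness of the $\rho$-neighbourhoods of distinct $(\gb_S,\gb_T)$ is not by itself enough to rule out double-counting, because a single spin assignment can lie in $\Sigma_G^{S,T}\cap\Sigma_G^{S',T'}$ for distinct bicliques; the paper handles this by bounding the total size of any polymer configuration by $12n/\Delta$ (Lemma~\ref{lem:configupper}, again via expansion) and using $\Delta\min(\lambdab)\geq 15q$ to show such overlapping assignments cannot be near any $(\gb_{S^*},\gb_{T^*})$, hence carry exponentially small weight. You would need to supply this argument (or an equivalent) for the upper bound $Z_G^{\polymer}\leq(1+\emm^{-\Omega(n)})Z_G$.
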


Using the above ingredients, we can prove our main Theorems~\ref{thm:main1} and~\ref{thm:main2}.
\begin{proof}[Proof of Theorems~\ref{thm:main1} and~\ref{thm:main2}]
We first prove the result for colorings, Theorem~\ref{thm:main1}. We just need to combine Lemmas~\ref{lem:coloring} and~\ref{lem:main3}. In the setting of Lemma~\ref{lem:main3} and Example~\ref{example}, we have that the interaction matrix for colorings is a $\delta$-matrix for $\delta=0$ and $\min(\lambdab)=1$. Hence, for $\Delta\geq 100q\log q$, we have that  $\Delta(1-\delta)\min(\lambdab)\geq 7q\big(5+\log \tfrac{(q-1)\Delta^3}{\min(\lambdab)}\big)$ as needed. Moreover, Lemma~\ref{lem:coloring} establishes the required $\rho$-maximality that is further needed. Therefore, the conclusion of Lemma~\ref{lem:main3} applies and we obtain the Theorem~\ref{thm:main1}.

The proof for independent sets, Theorem~\ref{thm:main2}, is analogous, by now combining Lemmas~\ref{lem:ind} and~\ref{lem:main3}. We may assume that $\lambda<1$, otherwise the result follows from the FPRAS for $\Delta\geq 53$ in \cite[Theorem 1]{liao2019counting}. In the setting of  Example~\ref{example}, we have that $q=2$, $\delta=0$ and $\min(\lambdab)=\lambda$. Then, for $\lambda>100\tfrac{\log \Delta}{\Delta}$, we have that  $\Delta(1-\delta)\min(\lambdab)\geq 7q\big(5+\log \tfrac{(q-1)\Delta^3}{\min(\lambdab)}\big)$, and the result follows analogously to above. 
\end{proof}

Finally, as mentioned in the introduction, we give evidence that the bounds on $q$ in Theorem~\ref{thm:main1} capture the limit of the polymer method for colorings, by showing that maximality fails when we go beyond the relevant range (note, some form of maximality is either implicitly or explicitly shown in all previous works on the problems).
\begin{lemma}\label{lem:fail}
For all even $q\geq 4$ and $\Delta=O(q \log q)$, for the $q$-colorings model, $O(\tfrac{1}{\Delta q})$-maximality fails with respect to any set of bicliques on almost all $\Delta$-regular bipartite graphs. 
\end{lemma}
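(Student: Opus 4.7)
The plan is to exhibit, on almost all $\Delta$-regular bipartite $G$, a single pair $(\alphab,\betab)$ whose contribution satisfies $Z_G^{\alphab,\betab} \ge Z_G/\poly(n)$, so that it qualifies as an $\eta$-phase vector for every $\eta>0$, yet whose $\ell_\infty$-distance to every biclique vector $(\gb_{S'},\gb_{T'})$ strictly exceeds $c/(\Delta q)$ for any fixed $c>0$; this refutes $O(1/(\Delta q))$-maximality with respect to any choice of $\Bc_\Delta$. The target $(\alphab,\betab)$ will be a non-trivial, two-valued fixpoint of the tree recursions~\eqref{eq:tr} associated with a balanced partition $(S,[q]\setminus S)$ of the colour set.

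First I would pin down the fixpoint: fix $S\subseteq[q]$ with $|S|=q/2$ (available since $q$ is even), and seek a solution of \eqref{eq:tr} of the form $r_i=x$ for $i\in S$, $r_i=y$ for $i\in [q]\setminus S$, with the mirrored pattern $c_j=y$ for $j\in S$, $c_j=x$ for $j\in[q]\setminus S$, and $y<x$. Substituting into the colorings recursion and using the normalization $x+y=2/q$ collapses the system to the scalar equation
\[
\frac{x}{y}=\left(\frac{1-y}{1-x}\right)^{\Delta-1},
\]
which admits a unique solution with $y<x$ whenever $\Delta>q$ (by a standard monotonicity argument linearising around the trivial solution $x=y=1/q$). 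Using $x\approx 2/q$ and $\log(x/y)\approx 2(\Delta-1)/q$ then gives $y=\Theta(q^{-2C-1})$ in the regime $\Delta\le Cq\log q$ with any constant $C\le\tfrac{1}{2}$, which is precisely the range captured by the $O(q\log q)$ hypothesis with a suitably small hidden constant.

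Second, I would certify $(\alphab,\betab):=(\rb,\cb)$ as a dominant phase vector on almost all $\Delta$-regular bipartite graphs, reusing the first/second-moment machinery underlying \cref{lem:coloring} and the hardness reductions of~\cite{antiferro}. Concretely, the $(S,[q]\setminus S)$-phase (configurations whose spin statistics lie in a fixed neighbourhood of $(\gb_S,\gb_{[q]\setminus S})$) contributes at least an $\exp(-o(n))$ fraction of $Z_G$ for almost all $G$. Within this phase, a stationary-point analysis of the rate function $(\alphab,\betab)\mapsto \tfrac{1}{n}\log\E[Z_G^{\alphab,\betab}]$ pins its unique local maximum at $(\rb,\cb)$ rather than at the uniform centre $(\gb_S,\gb_{[q]\setminus S})$, and a matching second-moment bound upgrades this to $Z_G^{\rb,\cb}\ge Z_G/\poly(n)$ on almost all $G$. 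Finally, the identity $x+y=2/q$ forces $\lVert(\rb,\cb)-(\gb_{S'},\gb_{T'})\rVert_\infty\ge y$ for every maximal biclique $(S',T')$ with $S',T'$ non-empty, with equality precisely at $(S',T')=(S,[q]\setminus S)$; any other $S'$ induces a support-mismatch coordinate contributing $\Omega(1/q)$. Since $y=\Theta(q^{-2C-1})$ strictly dominates $c/(\Delta q)=\Theta(1/(q^2\log q))$ for $C\le\tfrac{1}{2}$ and large $q$, the required distance gap is in place.

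The main obstacle is the certification step: it requires carrying out the stationary-point analysis of the rate function carefully enough to verify the tree-recursion fixpoint (rather than the biclique centre) as the maximum, followed by a second-moment calculation in the spirit of~\cite{antiferro,JKP} to transfer the expected bound to almost all graphs. A secondary subtlety is the regime of $\Delta$ just above the bifurcation $\Delta=q$, where the two-value fixpoint merges into the uniform one; there one instead notes that the dominant phase vector is close to $(\tfrac{1}{q},\ldots,\tfrac{1}{q})$, whose $\ell_\infty$-distance to any biclique centre is still $\Omega(1/q)\gg c/(\Delta q)$, preserving the conclusion.
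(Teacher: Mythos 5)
Your core computation is the same as the paper's: pin down the two-valued tree-recursion fixpoint associated with a balanced partition $(S,[q]\setminus S)$, estimate its small coordinate via the ratio equation $h=\big(\frac{h+t}{th+1}\big)^{\Delta-1}$ with $t=1-\frac 2q$ (giving $\log h\asymp \Delta/q$, hence a small coordinate of order $\emm^{-\Theta(\Delta/q)}/q$), and observe that this exceeds $c/(\Delta q)$ precisely when $\Delta=O(q\log q)$; the paper does exactly this, proving $h<\emm^{4(\Delta-1)/q}$ by checking $f(h_1)<h_1$. Where you diverge is the certification that this fixpoint is actually realised as a phase vector on almost all graphs: you propose to re-run the first/second-moment analysis to show $Z_G^{\alphab,\betab}\ge Z_G/\poly(n)$ directly, and you correctly flag this as the main obstacle. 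The paper sidesteps that work entirely: since $Z_G$ splits into $\poly(n)$ terms $Z_G^{\alphab,\betab}$, \emph{some} $\eta$-phase vector always exists, and Lemma~\ref{lem:hessiandominant} forces every such vector to lie within $\kappa$ of a maximizer image $(\fb(\rb^*),\fb(\cb^*))$; since all maximizers are permutations of one another and each sits at distance $b'>\tfrac{1}{\Delta q}$ from every biclique vector, maximality fails with no second-moment argument needed. This is the one idea your plan is missing, and adopting it would remove your stated obstacle. Two smaller points: (i) the phase vector is $(\fb(\rb^*),\fb(\cb^*))$, not $(\rb^*,\cb^*)$ itself — the map raises the ratio to the power $p=\tfrac{\Delta}{\Delta-1}$, which only changes $h$ by the harmless factor $\emm^{O(1/q)}$, but you should not simply set $(\alphab,\betab):=(\rb,\cb)$; (ii) your restriction $C\le\tfrac12$ on the hidden constant is fine, since the lemma only asserts failure for $\Delta$ below \emph{some} constant times $q\log q$ (the paper's own condition $q\ge 4\Delta/\log\Delta$ is of the same strength).
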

We note that Lemma~\ref{lem:fail} does not exclude the possibility of some exotic polymer model that can perhaps break the barrier therein. It does show however that the current approach cannot go substantially beyond the guarantee in Theorem~\ref{thm:main1}, and at the very least some major refinement of the framework will be needed. We conjecture that a similar barrier applies for the result of Theorem~\ref{thm:main2}, though here the bottleneck is in Lemma~\ref{lem:main3}. More precisely, for $\lambda=O(\tfrac{\log \Delta}{\Delta})$ in the non-uniqueness region,  polymers can be of size roughly $n^{\Omega(1)}$, which is in contrast to what happens when the polymer method applies (where the size of polymers turns out to be logarithmic in $n$).

\section{Phase vectors on random bipartite regular graphs}\label{sec:phases}
Let $(\BB,\lambdab)$ be a $q$-spin system. In this section, we use results from \cite{antiferro} to pinpoint the phase vectors on random $\Delta$-regular bipartite graphs, and give a sufficient condition to conclude maximality (Corollary~\ref{lem:phases}). We will invoke this in Section~\ref{sec:maximal} to prove Lemmas~\ref{lem:coloring} and~\ref{lem:ind}.

For $q$-dimensional probability vectors $\rb,\cb$, we will consider the function 
\[\Phi_{\BB,\lambdab,\Delta}(\rb,\cb)=  \frac{\rb^\TT \BB \cb}{\|\Lambdab^{-1}\rb\|_p \|\Lambdab^{-1}\cb\|_p},\]
where $p=\tfrac{\Delta}{\Delta-1}$  and $\Lambdab$ is the $q\times q$ diagonal matrix whose $i$-th diagonal entry is equal to $\lambda_i^{1/\Delta}$. We will be interested in the maximizers $(\rb,\cb)$ of $\Phi$.
\begin{lemma}\label{lem:maxima}
Suppose that the interaction matrix $\BB$ is ergodic, i.e., irreducible and aperiodic. Then, the maximizers of $\Phi_{\BB,\lambdab,\Delta}(\rb,\cb)$ are fixpoints of the tree recursions~\eqref{eq:tr}.
\end{lemma}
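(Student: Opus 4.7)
The plan is to set up a constrained optimization on the simplex and apply the KKT conditions. First, I observe that $\Phi_{\BB,\lambdab,\Delta}(\rb,\cb)$ is scale-invariant in each argument (the numerator $\rb^\TT\BB\cb$ and the denominator $\|\Lambdab^{-1}\rb\|_p\|\Lambdab^{-1}\cb\|_p$ both scale linearly in $\rb$, and likewise in $\cb$), so without loss of generality I restrict the maximization to the product of simplexes $\{\rb\ge 0:\sum_i r_i=1\}\times\{\cb\ge 0:\sum_j c_j=1\}$. Ergodicity of $\BB$ guarantees at least one entry $B_{ij}>0$; evaluating $\Phi$ at $\rb=e_i,\cb=e_j$ then yields a strictly positive value, so the maximum of $\Phi$ is positive and, in particular, $\rb^\TT\BB\cb>0$ at any maximizer $(\rb,\cb)$.

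Next I would compute the gradient of $\log\Phi$. Writing $N(\rb):=\|\Lambdab^{-1}\rb\|_p$ and using $p/\Delta=1/(\Delta-1)$, the partial in $r_i$ comes out to
\[
\frac{\partial\log\Phi}{\partial r_i}=\frac{(\BB\cb)_i}{\rb^\TT\BB\cb}-\frac{r_i^{p-1}\lambda_i^{-1/(\Delta-1)}}{N(\rb)^p}.
\]
Since $p-1=1/(\Delta-1)>0$, the second term extends continuously to the boundary $r_i=0$ and vanishes there. Introducing a Lagrange multiplier $\mu$ for the constraint $\sum_i r_i=1$ and multipliers $\nu_i\ge 0$ for $r_i\ge 0$ (with complementary slackness $\nu_i r_i=0$), the KKT stationarity conditions read $\partial_{r_i}\log\Phi=\mu-\nu_i$ at any maximizer. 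The standard trick---multiplying the stationarity equation for indices $i$ with $r_i>0$ by $r_i$ and summing---gives $1-1=\mu$, hence $\mu=0$.

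With $\mu=0$, for $r_i>0$ I rearrange and raise to the power $\Delta-1$ to obtain
\[
r_i=C\,\lambda_i\Bigl(\sum_{j\in[q]}B_{ij}c_j\Bigr)^{\Delta-1},\qquad C:=\Bigl(\frac{N(\rb)^p}{\rb^\TT\BB\cb}\Bigr)^{\Delta-1},
\]
which is exactly the first tree recursion in~\eqref{eq:tr}. For $r_i=0$, the KKT inequality $\partial_{r_i}\log\Phi\le 0$ forces $(\BB\cb)_i\le 0$; since $\BB$ and $\cb$ have nonnegative entries, this means $\sum_j B_{ij}c_j=0$, so the right-hand side of the recursion also vanishes and the identity still holds. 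The symmetric argument in $\cb$ gives the second tree recursion, so $(\rb,\cb)$ is a fixpoint of~\eqref{eq:tr}.

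The main technical point to watch for is the boundary handling---ensuring that the tree recursion continues to hold when some coordinates of $\rb$ or $\cb$ vanish. This is resolved cleanly by the fact that the $r_i^{p-1}$ factor in the gradient vanishes at $r_i=0$ (because $p-1>0$) together with the nonnegativity of the entries of $\BB$ and $\cb$. Ergodicity of $\BB$ is used only to ensure that the maximum value is strictly positive, so that $\rb^\TT\BB\cb\neq 0$ at any maximizer and the Lagrange analysis is well-posed. Beyond that, the argument is a routine Lagrange-multiplier computation.
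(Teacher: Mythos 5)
Your proof is correct, and it follows the same overall route as the paper (a Lagrange-multiplier/first-order-conditions argument on the product of simplexes, using the scale invariance of $\Phi$); the gradient computation, the identity $p/\Delta=1/(\Delta-1)$, and the ``multiply by $r_i$ and sum'' trick giving $\mu=0$ all check out. The one substantive difference is in how the boundary of the simplex is treated, and correspondingly in the role assigned to ergodicity. The paper uses ergodicity to rule out boundary maximizers altogether (so that the stationarity conditions are the interior ones and every coordinate of $\rb,\cb$ is strictly positive), citing the analogous argument in \cite[Lemma 4.11]{antiferro}. You instead allow boundary maximizers and show, via the KKT inequality $\partial_{r_i}\log\Phi\le\mu=0$ at $r_i=0$ together with nonnegativity of $\BB$ and $\cb$, that a vanishing coordinate forces $(\BB\cb)_i=0$, so the recursion holds degenerately with both sides zero and a single proportionality constant works for all coordinates; ergodicity is then only used to guarantee the maximum is strictly positive so that $\log\Phi$ is well defined at the maximizer. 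This is a legitimate and arguably more self-contained way to establish the lemma as literally stated (your observation that $r_i^{p-1}$ vanishes at $r_i=0$ since $p-1=1/(\Delta-1)>0$ is exactly the point needed to make the one-sided derivative of the norm term exist and equal zero). What your version does not deliver --- and what the paper's phrasing suggests it wants from ergodicity --- is strict positivity of the maximizers, which is convenient downstream when forming ratios such as $r_1^*/r_0^*$; but since the later sections re-derive or cite the precise structure of the maximizers from \cite{antiferro,GSV-ising} anyway, this does not create a gap for the lemma in question.
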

\begin{proof}
The proof follows by a relatively standard Lagrange multiplier argument. The assumption that $\BB$ is ergodic is needed to exlcude maximizers at the boundary, i.e., that some entry of $\rb,\cb$ is equal to zero. A closely related argument in the case $\lambdab=\ones$ can be found in \cite[Lemma 4.11]{antiferro}.
\end{proof}

 Let  $\Lb$ denote the matrix $\big\{\tfrac{B_{ij}r_ic_j}{\sqrt{r_i'c_j'}}\big\}_{i,j\in [q]}$, where $r_i':=r_i(\sum_{j\in [q]} B_{ij} c_j)$ for $i\in [q]$ and $c_j':=c_j(\sum_{i\in [q]}B_{ij}r_i)$ for $j\in [q]$. A maximiser $(\rb,\cb)$ of $\Phi$ is called Hessian dominant in \cite{antiferro} if the eigenvalues of the matrix $\Lb$ apart from the largest (which is equal to 1) are less in absolute value than $\tfrac{1}{\Delta-1}$.  Let $\fb: \rb\mapsto \alphab$ be the map given by $\alpha_i=(\lambda^{-1/\Delta}_ir_i/\|\Lambdab^{-1}\rb\|_p)^p$ for $i\in[q]$.

\begin{lemma}[\cite{antiferro}]\label{lem:hessiandominant}
Let $\Delta\geq 3$ be an integer and consider a $q$-spin system $(\BB,\lambdab)$. Suppose that all the maximizers of $\Phi_{\BB,\lambdab,\Delta}$ are Hessian dominant. Then, for every $\kappa>0$, there is $\eta>0$ such that for almost all $\Delta$-regular bipartite graphs, every $\eta$-phase vector $(\alphab,\betab)$ satisfies $\|(\alphab,\betab)-(\alphab^*,\betab^*)\|_\infty\leq \kappa$, where $(\alphab^*,\betab^*)=(\fb(\rb^*),\fb(\cb^*))$ and $(\rb^*,\cb^*)$ is a maximizer of $\Phi_{\BB,\lambdab,\Delta}$.
\end{lemma}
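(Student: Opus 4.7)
The plan is to use the standard first/second moment framework from \cite{antiferro} to pin down the phase vectors on a random $\Delta$-regular bipartite graph and then localize via Markov's inequality. Concretely: (i) a first-moment computation identifies a rate function $\Psi$ whose maximizers over pairs of probability vectors are precisely the images $(\fb(\rb^*),\fb(\cb^*))$ of maximizers of $\Phi_{\BB,\lambdab,\Delta}$; (ii) Hessian dominance is used to control the second moment at such a maximizer-image and yields a matching exponential lower bound on $Z_G$; (iii) Markov's inequality together with a union bound over a polynomial-size net of ``bad'' vectors gives the desired localization.

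For (i), using the configuration model on random $\Delta$-regular bipartite graphs together with Stirling's approximation, one obtains
\[
\tfrac{1}{n}\log\E[Z_G^{\alphab,\betab}] = \Psi(\alphab,\betab) + o(1),
\]
where $\Psi$ is the sum of the entropies of $\alphab$ and $\betab$, an activity term $\sum_i(\alpha_i+\beta_i)\log\lambda_i$, and an edge-weight term $\Delta\log(\alphab^\TT \BB \betab)$ arising from the mean number of pair-type configurations consistent with the marginal profile $(\alphab,\betab)$. The map $\fb$ defined in the excerpt is a diffeomorphism from the interior of the $p$-normalised cone onto the interior of the probability simplex, so pulling back via $\rb=\fb^{-1}(\alphab)$, $\cb=\fb^{-1}(\betab)$ and simplifying shows that $\Psi$ equals $\Delta\log \Phi_{\BB,\lambdab,\Delta}(\rb,\cb)$ up to an additive constant. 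Hence the argmax set of $\Psi$ coincides with $\{(\fb(\rb^*),\fb(\cb^*)):(\rb^*,\cb^*)\text{ maximizes }\Phi\}$; call the common maximum $\Psi_{\max}$.

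For (ii), at each maximizer-image $(\alphab^*,\betab^*)$ one expands $\E[(Z_G^{\alphab^*,\betab^*})^2]$ as an exponential-order sum over joint type matrices $(P,Q)\in \R^{q\times q}_{\geq 0}$ with marginals $\alphab^*,\betab^*$, and performs a saddle-point analysis. The leading exponential term is $\bigl(\E[Z_G^{\alphab^*,\betab^*}]\bigr)^2$, with subleading fluctuations controlled by a Gaussian integral whose covariance operator is essentially $I-(\Delta-1)\Lb$ restricted to the tangent space of the marginal constraints; the Hessian-dominance assumption on the spectrum of $\Lb$ is exactly what makes this operator positive definite, yielding $\E[(Z_G^{\alphab^*,\betab^*})^2]=O\bigl(\E[Z_G^{\alphab^*,\betab^*}]^2\bigr)$. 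Paley--Zygmund then shows that, with probability $1-o(1)$ over the choice of $G$, $Z_G\geq Z_G^{\alphab^*,\betab^*}\geq \exp\bigl(n\Psi_{\max}-o(n)\bigr)$.

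For (iii), fix $\kappa>0$. Since the set of maximizers of $\Phi$ is finite (Hessian dominance makes each such critical point isolated), the compact set $\Omega_\kappa$ of probability-vector pairs at $\ell_\infty$-distance at least $\kappa$ from every $(\fb(\rb^*),\fb(\cb^*))$ satisfies $\sup_{\Omega_\kappa}\Psi\leq \Psi_{\max}-\delta$ for some $\delta=\delta(\kappa)>0$. Discretising $(\alphab,\betab)$ at scale $1/n$ leaves only polynomially many feasible pairs; applying Markov's inequality to each and taking a union bound gives, with probability $1-o(1)$,
\[
\sum_{(\alphab,\betab)\in\Omega_\kappa} Z_G^{\alphab,\betab}\leq \exp\bigl(n(\Psi_{\max}-\delta/2)\bigr).
\]
Dividing by the lower bound on $Z_G$ from (ii) gives $Z_G^{\alphab,\betab}/Z_G\leq \emm^{-n\delta/3}$ uniformly on $\Omega_\kappa$, so any $\eta<\delta/3$ works. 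The main obstacle is step (ii): one must carry out the saddle-point analysis of $\E[(Z_G^{\alphab^*,\betab^*})^2]$ carefully, identifying the correct local change of variables and showing that the quadratic form of the exponent on the tangent space of the marginal constraints is negative definite precisely when all non-principal eigenvalues of $\Lb$ lie in $(-\tfrac{1}{\Delta-1},\tfrac{1}{\Delta-1})$. Once that technical heart is in place, steps (i) and (iii) are standard large-deviations maneuvers.
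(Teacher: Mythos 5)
Your route is genuinely different from the paper's. The paper's proof of this lemma is a two-line \emph{reduction}: it cites the $\lambdab=\onesb$ case directly from \cite[Section 6.4.1]{antiferro} and then reduces general $\lambdab$ to that case by observing that an $\eta$-phase vector of $(\BB,\lambdab)$ is exactly an $\eta$-phase vector of the spin system $(\widehat{\BB},\onesb)$ with $\widehat{\BB}=\Lambdab\BB\Lambdab$, that the maximizers of $\Phi_{\BB,\lambdab,\Delta}$ and $\widehat{\Phi}_{\widehat{\BB},\onesb,\Delta}$ correspond under $\rb\mapsto\Lambdab^{-1}\rb$, and that Hessian dominance is preserved under this correspondence. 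You instead try to rebuild the entire first/second-moment argument underlying the cited result. That is a legitimate strategy in principle, but it means you are taking on the full technical burden that the paper deliberately outsources.

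There is a genuine gap in your step (ii). Paley--Zygmund under $\E[(Z_G^{\alphab^*,\betab^*})^2]=O\bigl(\E[Z_G^{\alphab^*,\betab^*}]^2\bigr)$ yields only a \emph{constant} lower bound on $\Pr\bigl[Z_G^{\alphab^*,\betab^*}\ge\tfrac12\E Z_G^{\alphab^*,\betab^*}\bigr]$, not probability $1-o(1)$. Since the lemma is an ``almost all graphs'' statement, you need the lower bound on $Z_G$ to hold with high probability, and for that the argument in \cite{antiferro} goes through the small-subgraph conditioning method (conditioning on the counts of short cycles to remove the variance coming from them). An alternative patch that would close the gap is to combine your Paley--Zygmund constant-probability event with a bounded-differences concentration estimate for $\log Z_G$ (changing one edge of $G$ shifts $\log Z_G$ by $O(1)$), which upgrades constant probability to $1-\emm^{-\Omega(n)}$; but as written, the step does not go through. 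Separately, your first-moment expression for $\Psi$ is an oversimplification: in the bipartite configuration model the edge contribution is not $\Delta\log(\alphab^\TT\BB\betab)$ but a maximization over pair-type (edge-color) matrices with prescribed marginals $\Delta\alphab,\Delta\betab$; the two agree only at the optimal pair-type, so your claimed identity $\Psi\equiv\Delta\log\Phi+\mathrm{const}$ under the change of variables $\fb$ requires that maximization to be carried out, not elided.
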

\begin{proof}
The lemma is proved in \cite[Section 6.4.1]{antiferro} for $\lambdab=\onesb$. To extend to general $\lambdab$, consider the spin system $(\widehat{\BB},\widehat{\lambdab})$ where $\widehat{\BB}=\Lambdab\BB\Lambdab$ and $\widehat{\lambdab}=\ones$.  Note, on $\Delta$-regular bipartite graphs and arbitrary $\eta>0$, an $\eta$-phase vector $(\alphab,\betab)$ of the spin system $(\BB,\lambdab)$ is also an $\eta$-phase vector of the spin system with interaction matrix $\widehat{\BB}$ and activity vector $\mathbf{1}$, and vice versa. Moreover, the maximizers $(\rb^*,\cb^*)$ of $\Phi=\Phi_{\BB,\lambdab,\Delta}(\rb,\cb)$ are in 1-1 correspondence with the maximizers $(\widehat{\rb}^*,\widehat{\cb}^*)$  of $\widehat{\Phi}=\widehat{\Phi}_{\widehat{\BB},\widehat{\lambdab},\Delta}$  via the relation $(\rb^*,\cb^*)=(\Lambdab\widehat{\rb}^*,\Lambdab \widehat{\cb}^*)$. Note also that $(\rb^*,\cb^*)$ is Hessian dominant for $\Phi$ iff $(\widehat{\rb}^*,\widehat{\cb}^*)$ is Hessian dominant for $\widehat{\Phi}$, therefore establishing the result for general $\lambdab$. 
\end{proof}
Using Lemma~\ref{lem:hessiandominant}, we obtain the following using the definition of maximality (cf. Definition~\ref{def:maximality}).
\begin{corollary}\label{lem:phases}
Let $(\BB,\lambdab)$ be a $q$-spin system and $\Delta \geq 3$ be an integer. Suppose that there is a set of maximal bicliques $\Bc_\Delta$ such that all maximizers $(\rb^*,\cb^*)$ of $\Phi_{\BB,\lambdab,\Delta}$ are Hessian dominant and satisfy $\|(\alphab^*,\betab^*)-(\gb_S,\gb_T)\|_\infty\leq \tfrac{1}{15\Delta q}$ for some maximal biclique $(S,T)\in \Bc_\Delta$, where $(\alphab^*,\betab^*)=(\fb(\rb^*),\fb(\cb^*))$. Then, the spin system is $\tfrac{1}{12\Delta q}$-maximal with respect to $\Bc_\Delta$.
\end{corollary}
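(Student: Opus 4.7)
The plan is to derive the corollary as an essentially immediate consequence of Lemma~\ref{lem:hessiandominant} together with the triangle inequality, after choosing a sufficiently small tolerance $\kappa$ in Lemma~\ref{lem:hessiandominant}.

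First, I would set $\kappa := \tfrac{1}{60\Delta q}$ and invoke Lemma~\ref{lem:hessiandominant} with this value. Since by hypothesis every maximizer $(\rb^*,\cb^*)$ of $\Phi_{\BB,\lambdab,\Delta}$ is Hessian dominant, the lemma supplies a constant $\eta>0$ (depending on the spin system, on $\Delta$, and on $\kappa$) such that, for almost all $\Delta$-regular bipartite graphs, every $\eta$-phase vector $(\alphab,\betab)$ satisfies
\[
\|(\alphab,\betab)-(\alphab^*,\betab^*)\|_\infty \;\leq\; \kappa \;=\; \tfrac{1}{60\Delta q}
\]
for some maximizer $(\rb^*,\cb^*)$ of $\Phi_{\BB,\lambdab,\Delta}$, where $(\alphab^*,\betab^*)=(\fb(\rb^*),\fb(\cb^*))$.

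Next, by the corollary's hypothesis, this particular maximizer is itself close (in the $\ell_\infty$ norm) to the biclique vector of some maximal biclique $(S,T)\in\Bc_\Delta$:
\[
\|(\alphab^*,\betab^*)-(\gb_S,\gb_T)\|_\infty \;\leq\; \tfrac{1}{15\Delta q}.
\]
Applying the triangle inequality and using that $\tfrac{1}{60\Delta q}+\tfrac{1}{15\Delta q}=\tfrac{1}{60\Delta q}+\tfrac{4}{60\Delta q}=\tfrac{1}{12\Delta q}$, I obtain
\[
\|(\alphab,\betab)-(\gb_S,\gb_T)\|_\infty \;\leq\; \tfrac{1}{12\Delta q},
\]
which, by Definition~\ref{def:maximality}, is exactly the statement that the spin system is $\tfrac{1}{12\Delta q}$-maximal with respect to $\Bc_\Delta$.

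The only real content beyond bookkeeping is making sure that the chain ``phase vector $\leadsto$ (some) maximizer $\leadsto$ (some) biclique'' is well-defined: the maximizer in the second step must be the same maximizer that Lemma~\ref{lem:hessiandominant} provides in the first step. This is automatic because the hypothesis quantifies over \emph{all} maximizers, so whichever maximizer $(\rb^*,\cb^*)$ is produced by the lemma is covered. No further obstacle is anticipated; the proof is a two-line triangle inequality given the machinery already set up in Lemma~\ref{lem:hessiandominant}.
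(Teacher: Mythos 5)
Your proposal is correct and matches the paper's intent exactly: the paper states the corollary with no written proof beyond the remark that it follows from Lemma~\ref{lem:hessiandominant} and Definition~\ref{def:maximality}, and your choice of $\kappa=\tfrac{1}{60\Delta q}$ plus the triangle inequality ($\tfrac{1}{60\Delta q}+\tfrac{4}{60\Delta q}=\tfrac{1}{12\Delta q}$) is precisely the intended two-line argument. Your closing remark about the quantification over all maximizers is also the right point to check, and it is handled correctly.
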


\section{Algorithms from maximality: Proof of Lemma~\ref{lem:main3}}\label{sec:alg}
Let $\Delta\geq 3$ be an integer, and $(\BB,\lambdab)$ be a $q$-spin system, which is $\rho$-maximal for $\rho=\tfrac{1}{12\Delta q}$.   Consider also a bipartite graph $G=(V,E)$ with vertex bipartition $(L, R)$ and $|L|=|R|=n$. The following expansion property of sets $U\subseteq V$ in random regular bipartite graphs relaxes the previous expansion properties that were used in \cite{JKP,liao2019counting} which needed to consider bigger sets $U$; instead, whenever the spin system is $\tfrac{1}{12\Delta q}$-maximal, we only need to consider sets $U$ with size roughly $\tfrac{1}{\Delta}|V|$, whose expansion is $\Omega(\Delta)$.  For a set $U\subseteq V$, we use $\partial U$ to denote the vertices in $G$ which have a neighbor in $U$ but do not belong to $U$, and by $U^+$ the set $U \cup \partial U$.

\begin{lemma}\label{lem:expansion}
Let $\Delta\geq 3$ be an integer. For almost all $\Delta$-regular bipartite graphs $G=(V,E)$ with bipartition $(L,R)$, the following expansion properties hold: 
\begin{enumerate}
\item every set $U\subseteq V$ with $|U\cap L|\leq \tfrac{1}{3\Delta}|L|$ and $|U\cap R|\leq \tfrac{1}{3\Delta}|R|$  satisfies $|U^+|\geq \tfrac{\Delta-1}{2}|U|$.
\item every set $U\subseteq V$ with $|U\cap L|\leq \tfrac{1}{6\Delta}|L|$ and $|U\cap R|\leq \tfrac{1}{6\Delta}|R|$  satisfies $|\partial U|\geq \tfrac{\Delta}{7}|U|$.
\end{enumerate}
\end{lemma}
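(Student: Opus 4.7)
The proof is a standard configuration-model argument. First, I reduce both parts to vertex-expansion bounds on each side of the bipartition. Writing $A=U\cap L$ and $B=U\cap R$, the bipartite structure gives $N(A)\subseteq R$ and $N(B)\subseteq L$, so
\[
|U^+| = |A\cup N(B)|+|B\cup N(A)| \geq |N(A)|+|N(B)|,\qquad |\partial U| \geq |N(A)|+|N(B)|-|A|-|B|.
\]
Hence part (1) follows once I show that, with probability $1-o(1)$, every $A\subseteq L$ with $|A|\leq n/(3\Delta)$ satisfies $|N(A)|\geq \tfrac{\Delta-1}{2}|A|$ (and symmetrically for $B\subseteq R$), while part (2) follows from $|N(A)|\geq (1+\Delta/7)|A|$ for every $A\subseteq L$ with $|A|\leq n/(6\Delta)$ (and symmetrically).

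\textbf{Configuration model.} I would work in the standard configuration model: attach $\Delta$ half-edges to each vertex and sample a uniform perfect matching between the $\Delta n$ half-edges at $L$ and those at $R$. The resulting multigraph is simple with probability bounded away from $0$ as $n\to\infty$, and conditional on simplicity the distribution is uniform on $\Delta$-regular bipartite graphs, so any event of probability $o(1)$ in the configuration model remains $o(1)$ in the uniform model. For fixed $A\subseteq L$, $T\subseteq R$ with $|A|=a, |T|=t$, the event $N(A)\subseteq T$ forces each of the $\Delta a$ half-edges at $A$ to pair with one of the $\Delta t$ half-edges at $T$, yielding the standard estimate $\Pr[N(A)\subseteq T]\leq (t/n)^{\Delta a}$.

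\textbf{Union bound and main obstacle.} Union-bounding and using $\binom{n}{k}\leq (\eee n/k)^k$, for the relevant target expansion factor $c$,
\[
P_a \,:=\, \Pr\bigl[\,\exists A\subseteq L,\,|A|=a,\,|N(A)|<ca\,\bigr] \,\leq\, \binom{n}{a}\binom{n}{\lceil ca\rceil}\Bigl(\tfrac{ca}{n}\Bigr)^{\Delta a}\,\leq\, \bigl[\,\eee^{1+c}\,c^{\Delta-c}\,(a/n)^{\Delta-1-c}\,\bigr]^{a}.
\]
For part (1) one substitutes $c=(\Delta-1)/2$ with $a/n\leq 1/(3\Delta)$, and for part (2) one substitutes $c=1+\Delta/7$ with $a/n\leq 1/(6\Delta)$. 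In both cases $\Delta-1-c>0$ and the constants are calibrated so that the bracketed base is some fixed $\kappa<1$ throughout the allowed range of $a$. The main technical step is to upgrade the resulting $\sum_a \kappa^a=O(1)$ bound to $o(1)$. I plan to split the sum at $a_0=\log n$: for $a\leq a_0$, the sharper Stirling estimate $\binom{n}{k}\leq n^k/k!$ gives $P_a\leq C n^{-\gamma a}$ for some $\gamma>0$, so $\sum_{a\leq a_0}P_a=O(n^{-\gamma})=o(1)$; for $a>a_0$ the geometric tail satisfies $\sum_{a>a_0}\kappa^a=O(\kappa^{a_0})=o(1)$. A symmetric argument for $B\subseteq R$ and a final union bound over the two sides of the bipartition then gives both (1) and (2). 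The corner case $\Delta=3$ for part~(1) is in fact trivial, since $|N(A)|\geq |A|$ holds deterministically in any $\Delta$-regular bipartite graph: the $\Delta|A|$ edges out of $A$ must be absorbed by the vertices of $N(A)$, each of which has degree at most $\Delta$.
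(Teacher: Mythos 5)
Your overall strategy --- reduce both parts to one-sided neighborhood expansion $|N(A)|\ge c|A|$ for subsets $A$ of each side, then run a first-moment/union bound in the configuration model --- is exactly the paper's approach; the paper simply packages the identical first-moment computation as Bassalygo's entropy criterion $\Delta>\frac{H(a)+H(ab)}{H(a)-ab\,H(1/b)}$ and checks that inequality numerically. Your reduction of $|U^+|$ and $|\partial U|$ to $|N(A)|+|N(B)|$, your upgrade of the $O(1)$ union bound to $o(1)$ by splitting at $a_0=\log n$, the contiguity argument for the configuration model, and the observation that part~(1) at $\Delta=3$ is deterministic are all correct.

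The gap is the unverified claim that ``the constants are calibrated so that the bracketed base is some fixed $\kappa<1$ throughout the allowed range of $a$.'' For part~(2) with $\Delta=3$ this is false: there $c=1+\Delta/7=10/7$, and at the extreme point $a/n=1/18$ your base is
\[
\emm^{1+c}\,c^{\Delta-c}\,(a/n)^{\Delta-1-c}\;=\;\emm^{17/7}\,(10/7)^{11/7}\,(1/18)^{4/7}\;\approx\;3.8\;>\;1,
\]
and indeed your bound only yields $\kappa<1$ for $a/n\lesssim 0.005$, an order of magnitude short of $1/18$. The culprit is that the crude estimates $\binom{n}{k}\le(\emm n/k)^k$ and $\Pr[N(A)\subseteq T]\le(t/n)^{\Delta a}$ each lose a factor exponential in $a$; the latter is especially lossy here because $\Delta a$ is a constant fraction (namely $a/t=1/c=70\%$) of the $\Delta t$ half-edges of $T$, so the product $\prod_{i<\Delta a}(\Delta t-i)$ is exponentially smaller than $(\Delta t)^{\Delta a}$. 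Meanwhile the Bassalygo condition for part~(2) at $\Delta=3$ holds with essentially no slack (its right-hand side evaluates to $\approx 2.96$), so none of that loss can be afforded. To close the gap you must run the same union bound with the sharp estimates $\binom{n}{k}\le 2^{nH(k/n)}$ and $\Pr[N(A)\subseteq T]=\binom{\Delta t}{\Delta a}\big/\binom{\Delta n}{\Delta a}$, which reproduces precisely the entropy condition the paper invokes and which does (barely) hold for all $\Delta\ge3$; alternatively, restrict to $\Delta\ge4$, where your cruder constants do go through.
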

\begin{proof}
For the first item, consider a subset $U\subseteq V$ with $|U\cap L|\leq \tfrac{1}{3\Delta}|L|$ and $|U\cap R|\leq \tfrac{1}{3\Delta}|R|$. We will show that 
\begin{equation}\label{eq:expansion1}
|\partial (U \cap L)|\geq \tfrac{\Delta-1}{2}|U\cap L| \mbox{ and } |\partial (U \cap R)|\geq \tfrac{\Delta-1}{2}|U\cap R|.
\end{equation}
From this, we obtain that $|U^+|=|U\cup \partial U|\geq |\partial (U \cap L)|+|\partial (U \cap R)|\geq \tfrac{\Delta-1}{2}|S|$. To verify \eqref{eq:expansion1}, we use a sufficient condition due to Bassalygo \cite{bassalygo1981}, see also \cite[Theorem 22]{JKP}. Namely, for $a=\tfrac{1}{3\Delta}$, $b= \tfrac{\Delta-1}{2}$ and $H(x)=-x \log_2(x)-(1-x)\log_2(1-x)$, we check that $\Delta>\frac{H(a)+H(a b)}{H(a)- a b H(1/b)}$, which indeed holds for all $\Delta\geq 3$.

The proof of the second item is analogous. Consider a subset $U\subseteq V$ with  $|U\cap L|\leq \tfrac{1}{6\Delta}|L|$ and $|U\cap R|\leq \tfrac{1}{6\Delta}|R|$. We will show that 
\begin{equation}\label{eq:expansion2}
|\partial (U \cap L)|\geq \big(\tfrac{\Delta}{7}+1\big)|U\cap L| \mbox{ and } |\partial (U \cap R)|\geq \big(\tfrac{\Delta}{7}+1\big)|U\cap R|.
\end{equation}
From this, we obtain that $|\partial U|\geq |\partial (U \cap L)|+|\partial (U \cap R)|-|U|\geq \tfrac{\Delta}{7}|U|$. The proof of \eqref{eq:expansion2} is by verifying again the same condition as above, now for the values $a=\tfrac{1}{6\Delta}$ and $b= \tfrac{\Delta}{7}+1$.
\end{proof}

Following \cite{biclique}, we will define a polymer model corresponding to a biclique $(S,T)$ of the spin system.  Let $G^3$ be the graph on vertex set $V$ where two vertices $u,v$ are adjacent iff $\dist(u,v)\le 2$. A subset $U\subseteq V$ of vertices is said to be $G^3$-connected if the induced subgraph $G^3[U]$ is connected. A polymer $\gamma=(V_\gm,\sigma_\gm)$ consists of a subset of vertices of $G$, $V_\gamma$, which is $G^3$ connected, and a spin assignment on $V_\gamma$,  $\sigma_\gamma:V_\gamma\rightarrow [q]$, such that every vertex in $V_\gamma \cap L$ gets a spin in $[q]\backslash S$ and every vertex in $V_\gm \cap R$ gets a spin in $[q]\backslash T$. Two polymers $\gm_1,\gm_2$ are compatible (written as $\gm_1 \sim \gamma_2$) if and only if $\dist(\gm_1,\gm_2) > 3$, i.e., $\gm_1 \cup \gm_2$ is not $G^3$-connected. 

The size of a polymer $\gm$, denoted by $|\gm|$, is the number of vertices it contains. We use $E_\gamma$ to denote the edges of $G$ whose both endpoints lie in $\gamma$, $\partial V_\gamma$ to denote the vertices in $G$ which have a neighbor in $V_\gamma$ but do not belong to $V_\gm$, and by $V_\gm^+$ the set $V_\gm \cup \partial V_\gm$.  For a polymer $\gm$, the weight $w^{S,T}_G(\gm)$ of the polymer is given by 
\begin{equation}\label{eq:wSTg}
w^{S,T}_G(\gm)=\frac{\prod_{u\in V_\gamma}\lambda_{\sigma_\gm(u)}\prod_{(u,v)\in E_\gm}B_{\sigma_\gm(u),\sigma_\gm(v)}\prod_{u\in \partial V_\gamma}F_u}{\big(\sum_{i\in S} \lambda_i)^{|V_\gm^+\cap L|}\big(\sum_{j\in T} \lambda_j\big)^{|V_\gm^+\cap R|}},
\end{equation}
where
\begin{equation}\label{eq:Fu}
F_u=\sum_{i\in S}\lambda_i\prod_{v\in V_\gm\cap \partial u}B_{i,\sigma_\gm(v)} \mbox{ if } u\in \partial V_\gm\cap L,\quad
F_u=\sum_{j\in T}\lambda_j\prod_{v\in V_\gm\cap \partial u}B_{j,\sigma_\gm(v)} \mbox{ if } u\in \partial V_\gm\cap R.
\end{equation}

Let $\mathcal{P}^{S,T}_G$ be the set of all polymers $\gm=(V_\gm,\sigma_\gm)$ with $|V_\gm|\leq 2q \rho n=\tfrac{n}{6\Delta}$. A configuration $\Gm=(V_\Gm, \sigma_\Gm)$ of polymers is a collection of mutually compatible polymers $\gamma_1,\hdots, \gamma_k\in \mathcal{P}^{S,T}_G$ with $V_\Gamma=\cup_{t\in [k]} V_{\gm_t}$ and $\sigma_\Gm$ the spin assignment on $V_\Gm$ which agrees with $\sigma_{\gm_t}$ on $V_{\gm_t}$ for each $t\in [k]$. Let $\Omega^{S,T}_G$ be the set of all possible configurations $\Gamma$. The size of a configuration is $|\Gm| = \sum_{\gm\in\Gm} |V_\gm|$. 

\begin{lemma}\label{lem:configupper}
Every configuration $\Gamma$ satisfies $|V_\Gamma|\leq 12n/\Delta$.
\end{lemma}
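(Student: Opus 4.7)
The plan is to apply Lemma~\ref{lem:expansion}(1) to each individual polymer of $\Gamma$ and then sum, exploiting the fact that the extended vertex sets of compatible polymers are vertex-disjoint.

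First I would observe that every $\gamma\in\mathcal{P}^{S,T}_G$ satisfies $|V_\gamma|\le 2q\rho n=n/(6\Delta)$ by the definition of $\mathcal{P}^{S,T}_G$, and in particular $|V_\gamma\cap L|,|V_\gamma\cap R|\le n/(3\Delta)$. Lemma~\ref{lem:expansion}(1) applied with $U=V_\gamma$ therefore yields the per-polymer expansion bound
\[
|V_\gamma^+|\;\ge\;\tfrac{\Delta-1}{2}\,|V_\gamma|\qquad\text{for every }\gamma\in\Gamma.
\]

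Next I would verify that the family $\{V_\gamma^+\}_{\gamma\in\Gamma}$ is pairwise vertex-disjoint. Indeed, any two polymers $\gamma,\gamma'\in\Gamma$ are compatible and thus $\dist_G(V_\gamma,V_{\gamma'})>2$; a common vertex $w\in V_\gamma^+\cap V_{\gamma'}^+$ would lie at distance at most $1$ from each of $V_\gamma$ and $V_{\gamma'}$, which would force $\dist_G(V_\gamma,V_{\gamma'})\le 2$ and contradict compatibility.

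Summing the per-polymer expansion bound over $\Gamma$, using this disjointness, and combining with the ambient capacity bound $|V_\Gamma^+|\le |V|=2n$, I get
\[
2n\;\ge\;|V_\Gamma^+|\;=\;\sum_{\gamma\in\Gamma}|V_\gamma^+|\;\ge\;\tfrac{\Delta-1}{2}\,|V_\Gamma|,
\]
hence $|V_\Gamma|\le 4n/(\Delta-1)$. For $\Delta\ge 3$ this is at most $6n/\Delta$, and in particular at most $12n/\Delta$ as claimed.

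There is no genuine obstacle here: once one notices the disjointness of the $V_\gamma^+$'s, the conclusion is a one-line capacity argument. The only subtlety worth emphasizing is that the per-polymer size cap $n/(6\Delta)$, which is baked into the choice $\rho=\tfrac{1}{12\Delta q}$, is precisely what lets each $V_\gamma$ satisfy the one-sided size hypothesis $n/(3\Delta)$ required by Lemma~\ref{lem:expansion}(1).
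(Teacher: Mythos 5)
Your proof is correct, and it uses the same two ingredients as the paper's argument (Lemma~\ref{lem:expansion}(1) and the pairwise disjointness of the sets $V_\gamma^+$ for compatible polymers, against the capacity $|V|=2n$), but it deploys them more directly. The paper argues by contradiction: assuming $|V_\Gamma|>12n/\Delta$, it greedily extracts $36$ disjoint sub-configurations each of size in $(\tfrac{n}{6\Delta},\tfrac{n}{3\Delta}]$, applies the expansion bound to each group, and derives $\sum_t |V_{\Gamma_t}^+|>2n$. Your version skips the grouping entirely by noting that each single polymer already satisfies the size hypothesis of Lemma~\ref{lem:expansion}(1), so one can sum the per-polymer bounds $|V_\gamma^+|\ge\tfrac{\Delta-1}{2}|V_\gamma|$ over the whole configuration; this is a genuine streamlining and yields the slightly sharper conclusion $|V_\Gamma|\le 4n/(\Delta-1)\le 6n/\Delta$ for $\Delta\ge 3$. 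Your disjointness justification is also sound: compatibility forces $\dist(V_\gamma,V_{\gamma'})>2$, so the distance-$1$ neighborhoods $V_\gamma^+$ cannot meet.
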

\begin{proof}
Suppose that there exists a configuration $\Gamma$ with $|V_\Gamma|> 12n/\Delta$. Then, we can extract greedily disjoint configurations $\Gm_1,\hdots, \Gm_{36}\subseteq \Gm$ (which are a collection of polymers belonging to $\Gm$) such that $\tfrac{n}{6\Delta}< |\Gm_i|\leq\tfrac{n}{3\Delta}$. By Lemma~\ref{lem:expansion}, we have that $|V_{\Gm_i}^+|\geq \tfrac{\Delta-1}{2}|\Gamma_i|>\frac{n}{6\Delta}\tfrac{\Delta-1}{2}$ and therefore $\sum^{36}_{t=1}|V_{\Gm_i}^+|>  \frac{6n}{\Delta}\tfrac{\Delta-1}{2}\geq 2n$. Therefore, since $G$ has $2n$ vertices, the sets $V_{\Gm_1}^+,\hdots, V_{\Gm_{36}}^+$ cannot be pairwise disjoint, contradicting the fact that the configuration $\Gm$ consists of pairwise compatible polymers.
\end{proof}

The weight $w^{S,T}_G(\Gm)$ of a configuration $\Gm$ is given by the product of the weights of the polymers that $\Gm$ consists of. We define the partition function of the polymer model as
\[Z^{S,T}_G=\mbox{$\sum_{\Gm\in \Omega^{S,T}_G}$}\, w^{S,T}_G(\Gm), \mbox{ and its Gibbs distribution by } \mu^{S,T}_G(\Gamma)=w^{S,T}_G(\Gm)/Z^{S,T}_G \mbox{ for $\Gm\in \Omega^{S,T}_G$}.\]
Finally, we let $Z^{\polymer}_G=\sum_{(S,T)\in \Bc_\Delta}\,\big(\mbox{$\sum_{i\in S}$}\,\lambda_i\big)^n \big(\mbox{$\sum_{j\in T}$}\,\lambda_j\big)^n Z^{S,T}_G$.

\begin{lemma}\label{lem:estimate}
Let $\Delta\geq3$ be an integer, and $(\BB,\lambdab)$ be a $q$-spin system which is $\tfrac{1}{12\Delta q}$-maximal with respect to a set of maximal bicliques $\Bc_\Delta$. Suppose further that $\Delta \min(\lambdab)\geq 15q$. Then, there is $\epsilon=\emm^{-\Omega(n)}$ such that, for almost all $\Delta$-regular bipartite graphs $G$ with $n$ vertices on each part,  it holds that $(1-\epsilon)Z_G\leq Z^{\polymer}_G\leq (1+\epsilon) Z_G$.
\end{lemma}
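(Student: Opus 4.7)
The plan is to rewrite $Z^{\polymer}_G$ as a restricted sum of $w_G(\sigma)$ over spin assignments whose defect sets relative to some biclique in $\Bc_\Delta$ form small clusters, and then invoke $\rho$-maximality to compare with $Z_G$. The first step is a combinatorial identity: for each $(S,T)\in\Bc_\Delta$, writing $A=\sum_{i\in S}\lambda_i$ and $B=\sum_{j\in T}\lambda_j$, direct expansion of the polymer weight \eqref{eq:wSTg} shows that
$$A^n B^n\, Z^{S,T}_G \;=\; \sum_{\sigma\in \Sigma^{S,T}_G} w_G(\sigma),$$
where $\Sigma^{S,T}_G$ is the set of $\sigma\colon V\to[q]$ such that every $G^3$-connected component of the defect set $D_{S,T}(\sigma):=\{v\in L:\sigma(v)\notin S\}\cup\{v\in R:\sigma(v)\notin T\}$ has size $\leq \tfrac{n}{6\Delta}$. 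The mechanics: the denominators $A^{|V_\gm^+\cap L|}B^{|V_\gm^+\cap R|}$ combine with $A^n B^n$ to ``sum in'' spin choices in $S$ (resp.~$T$) over $L\setminus V_\Gm^+$ (resp.~$R\setminus V_\Gm^+$); the $F_u$ terms play the analogous role on $\partial V_\Gm$; and edges lying entirely outside $V_\Gm$ pair an $S$-spin with a $T$-spin and hence contribute $B_{ij}=1$ because $(S,T)$ is a biclique. Summing over $\Bc_\Delta$ gives $Z^{\polymer}_G=\sum_{(S,T)\in\Bc_\Delta}\sum_{\sigma\in\Sigma^{S,T}_G}w_G(\sigma)$.

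For the lower bound, partition $\sigma$'s by marginal type $(\alphab,\betab)$. The non-$\eta$-phase types contribute at most $(n+1)^{2q}\emm^{-\eta n}Z_G=\emm^{-\Omega(n)}Z_G$ in aggregate. For an $\eta$-phase type, $\rho$-maximality with $\rho=\tfrac{1}{12\Delta q}$ supplies some $(S,T)\in\Bc_\Delta$ with $\norm{(\alphab,\betab)-(\gb_S,\gb_T)}_\infty\leq\rho$; the total defect wrt $(S,T)$ is then $\leq 2q\rho n=\tfrac{n}{6\Delta}$, so \emph{every} $G^3$-component is of size $\leq\tfrac{n}{6\Delta}$, and $\sigma\in\Sigma^{S,T}_G$. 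Therefore all but an $\emm^{-\Omega(n)}$-fraction of $Z_G$ is captured, giving $Z^{\polymer}_G\geq(1-\emm^{-\Omega(n)})Z_G$.

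For the upper bound, the only loss is over-counting by $\sigma\in\Sigma^{S,T}_G\cap\Sigma^{S',T'}_G$ with distinct $(S,T),(S',T')\in\Bc_\Delta$. By Lemma~\ref{lem:configupper}, the defect wrt each of the two bicliques totals at most $\tfrac{12n}{\Delta}$, which forces the $L$-marginal of $\sigma$ to place mass $\geq 1-\tfrac{24}{\Delta}$ on $S\cap S'$ and the $R$-marginal mass $\geq 1-\tfrac{24}{\Delta}$ on $T\cap T'$; since $(S,T)\neq(S',T')$ are maximal bicliques, at least one of these intersections is a strict subset of $S$ (or of $T$). The hypothesis $\Delta\min(\lambdab)\geq 15q$ makes each $(\gb_{S''},\gb_{T''})\in\Bc_\Delta$ spread its mass so that every nonzero coordinate is $\geq\tfrac{\min(\lambdab)}{q}\geq\tfrac{15}{\Delta}$, creating a gap that prevents $(\alphab,\betab)$ from being within $\rho$ in $\ell^\infty$ of any $(\gb_{S''},\gb_{T''})$; by $\rho$-maximality such $(\alphab,\betab)$ must be non-phase. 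Hence the doubly-counted weight is $\leq |\Bc_\Delta|^2 (n+1)^{2q}\emm^{-\eta n}Z_G=\emm^{-\Omega(n)}Z_G$, giving $Z^{\polymer}_G\leq(1+\emm^{-\Omega(n)})Z_G$.

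The main technical hurdle is the over-counting step: quantifying how ``defect-smallness for two distinct bicliques'' pushes the marginals outside \emph{every} phase neighborhood simultaneously. The hypothesis $\Delta\min(\lambdab)\geq 15q$ is calibrated precisely to supply the needed margin (possibly with a small amount of case analysis when the two maximal bicliques are nested, in which case the distinguishing coordinate lives on the $R$-side rather than the $L$-side), while $\rho$-maximality does the complementary work of associating each $\eta$-phase vector with a unique biclique in $\Bc_\Delta$.
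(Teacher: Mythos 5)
Your proposal is correct and follows essentially the same route as the paper's proof: the same identity expressing $\big(\sum_{i\in S}\lambda_i\big)^n\big(\sum_{j\in T}\lambda_j\big)^n Z^{S,T}_G$ as the aggregate weight of assignments whose defect set splits into small $G^3$-components, the same partition by marginal type plus $\rho$-maximality for the lower bound, and the same combination of Lemma~\ref{lem:configupper} with the hypothesis $\Delta\min(\lambdab)\geq 15q$ to show that overlap configurations cannot be $\eta$-phase vectors. The only (cosmetic) difference is that the paper closes the overlap step by exhibiting a single spin $i\in S^*\setminus S$ with $\alpha_i\leq 12/\Delta$ versus $g_{S^*,i}\geq 15/\Delta$, which is exactly the $3/\Delta>\tfrac{1}{12\Delta q}$ gap you describe.
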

\begin{proof}
By the $\tfrac{1}{12\Delta q}$-maximality of the spin system with respect to $\Bc_\Delta$ (cf. Definition~\ref{def:maximality}),  there is an $\eta>0$ such that  for almost all $\Delta$-regular graphs $G$,  every $\eta$-phase vector $(\alphab,\betab)$ of $G$ belongs to 
\[\Fc_\Delta:=\Big\{(\alphab,\betab)\,\Big|\, \| (\alphab,\betab)-(\gb_S,\gb_T)\|_\infty \le \tfrac{1}{12\Delta q} \mbox{ for some maximal biclique } (S,T)\in \Bc_\Delta\Big\}.\] 
Let $\Sigma_G^{\max}=\{\sigma \mid \sigma\in \Sigma^{\alphab,\betab}_G \mbox{ for some } (\alphab,\betab)\in \Fc_\Delta\}$ where, recall from \eqref{eq:sigmaalphabeta}, that $\Sigma^{\alphab,\betab}_G$ is the set of spin assignments where exactly $n\alpha_i,n\beta_i$ vertices are assigned the spin $i\in [q]$ on $L,R$, respectively.

We first show the lower bound on $Z^{\polymer}_G$. Consider the polymer model corresponding to a maximal biclique $(S,T)\in \Bc_\Delta$. Every configuration $\Gm\in \Omega^{S,T}_G$ maps to a set of spin assignments 
\[\Sigma^{S,T}_G(\Gm)=\{\sigma:V\rightarrow [q]\mid \sigma(V_\Gm)=\sigma_\Gm,\, \sigma(L\backslash V_\Gm)\subseteq S,\, \sigma(R\backslash V_\Gm)\subseteq T\},\] where recall that $\sigma_\Gm$ is a spin assignment on $V_\Gm$ that satisfies $\sigma_\Gm(V_\Gm\cap L)\subseteq [q]\backslash S$ and $\sigma_\Gm(V_\Gm\cap R)\subseteq [q]\backslash T$.  Therefore, for distinct $\Gm,\Gm'\in  \Omega^{S,T}_G$ we have that the sets $\Sigma^{S,T}_G(\Gm)$ and $\Sigma^{S,T}_G(\Gm')$ are disjoint. Let $\Sigma^{S,T}_G=\bigcup_{\Gm\in \Omega^{S,T}_G} \Sigma^{S,T}_G(\Gm)$. Using that configurations $\Gamma$ consist of disjoint $G^3$-connected sets, we obtain that the aggregate weight $\sum_{\sigma\in \Sigma^{S,T}_G(\Gm)}w_G(\sigma)$ equals $\big(\mbox{$\sum_{i\in S}$}\,\lambda_i\big)^n \big(\mbox{$\sum_{j\in T}$}\,\lambda_j\big)^n w^{S,T}_G(\Gm)$ (see for example \cite[Lemma 17]{biclique}), and therefore 
\[Z^{S,T}_G=\sum_{\sigma\in \Sigma^{S,T}_G}w_G(\sigma).\]
Moreover, note that for $(\alphab,\betab)$ with $\| (\alphab,\betab)-(\gb_S,\gb_T)\|_\infty \le \tfrac{1}{12\Delta q}$, the number of vertices in $L$ that do not get a spin in $S$ is at most $\tfrac{n}{12\Delta}$, and similarly for vertices in $R$ that do not get a spin in $T$, for a total of $\tfrac{n}{6\Delta}$ vertices, giving that $\Sigma^{\alphab,\betab}_G\subseteq  \Sigma^{S,T}_G$. Observe now that every $(\alphab,\betab)\notin\Fc_\Delta$ is not an $\eta$-phase vector and therefore $Z^{\alphab,\betab}_G\leq \emm^{-\eta n} Z_G$. There are at most $n^{2q}$ such pairs with $n\alphab,n\betab\in\mathbb{Z}^q$ and therefore, combining the above, it follows that 
\[Z_G-Z^{\polymer}_G\leq \sum_{(\alphab,\betab)\notin\Fc_\Delta} Z^{\alphab,\betab}_G\leq n^{2q} \emm^{-\eta n} Z_G\leq \emm^{-\Omega(n)}Z_G,\] showing that $Z^{\polymer}_G\geq (1-\emm^{-\Omega(n)})Z_G$.

We next show the upper bound on $Z^{\polymer}_G$. Consider $\Sigma^{\mathrm{overlap}}_G=\bigcup_{(S,T)\neq (S',T')\in \Bc_\Delta} (\Sigma^{S,T}_G\cap \Sigma^{S',T'}_G)$.    We will show shortly that $\Sigma^{\mathrm{overlap}}_G\subseteq \Sigma_G\backslash \Sigma_G^{\max}$. Assuming this for the moment, we conclude the proof by noting first that for $(\alphab,\betab)$ which is not an $\eta$-phase vector it holds that $Z^{\alphab,\betab}_G/Z_G<\emm^{-\eta n}$. Therefore we obtain that the aggregate weight of spin assignments in $\Sigma^{\mathrm{overlap}}_G$ is at most $n^{2q} \emm^{-\eta n}Z_G=\emm^{-\Omega(n)}Z_G$, yielding that $Z_G\geq (1-\emm^{-\Omega(n)})Z^{\polymer}_G$.

It remains to prove that $\Sigma^{\mathrm{overlap}}_G\subseteq \Sigma_G\backslash \Sigma_G^{\max}$. For the sake of contradiction, suppose otherwise. Then there exists a spin assigment $\sigma$, distinct bicliques $(S,T), (S',T')\in \Bc_\Delta$, and  a biclique $(S^*,T^*)\in \Bc_\Delta$ such that $\sigma\in \Sigma^{S,T}_G\cap \Sigma^{S',T'}_G$ and  $\sigma\in  \Sigma^{\alphab,\betab}_G$ for some $\| (\alphab,\betab)-(\gb_{S^*},\gb_{T^*})\|_\infty \le \tfrac{1}{12\Delta q}$. Since $(S,T)$ and $(S',T')$ are distinct and maximal, we may assume w.l.o.g. have that $S\neq S^*$ and $T\neq T^*$.  Since $(S^*,T^*)$ is maximal, it cannot be the case that $S^*\subseteq S$ and $T^*\subseteq T$, so  assume w.l.o.g. that $i\in S^*\backslash S$.  Let $n_i$ be the vertices in $L$ that have the spin $i$ under $\sigma$. Since $\sigma\in \Sigma^{S,T}_G(\Gm)$ for some $\Gm\in \Omega^{S,T}_G$ and $i\notin S$, from Lemma~\ref{lem:configupper} we have that $n_i\leq |V_{\Gm}|\leq 12n/\Delta$. Then, using the assumption $\Delta \min(\lambdab)\geq 15q$ and the fact that the entries of $\lambdab$ are $\leq 1$, we have the crude bound $\frac{\lambda_i}{\sum_{i'\in S^*}\lambda_{i'}}\geq \min (\lambdab)/q\geq 15/\Delta$, and therefore $|\frac{\lambda_i}{\sum_{i'\in S^*}\lambda_{i'}}-\tfrac{n_i}{n}|\geq \tfrac{3}{\Delta }>\tfrac{1}{12\Delta q}$  contradicting the choice of $(S^*,T^*)$. 
\end{proof}

We are now ready to prove Lemma~\ref{lem:main3}, which we restate here for convenience. The proof uses the Markov chain approach for studying polymer models in \cite{chen2019fast}, as employed for general spin systems in \cite{biclique}.
\begin{lemmamainthree}
\statelemmamainthree
\end{lemmamainthree}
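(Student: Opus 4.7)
The strategy is to first invoke Lemma~\ref{lem:estimate} to reduce approximating $Z_G$ to approximating the polymer partition function $Z^{\polymer}_G$. The hypothesis $\Delta(1-\delta)\min(\lambdab)\geq 7q\bigl(5+\log\tfrac{(q-1)\Delta^3}{\min(\lambdab)}\bigr)$ trivially implies $\Delta\min(\lambdab)\geq 35q\geq 15q$, so Lemma~\ref{lem:estimate} gives $Z^{\polymer}_G = (1\pm e^{-\Omega(n)})Z_G$ for almost all $\Delta$-regular bipartite graphs $G$. Since $|\Bc_\Delta|\leq 2^{2q}=O(1)$, and the biclique prefactors $(\sum_{i\in S}\lambda_i)^n(\sum_{j\in T}\lambda_j)^n$ are easy to compute exactly, it suffices to produce an FPRAS and an approximate sampler for each individual polymer model $(Z^{S,T}_G,\mu^{S,T}_G)$ with $(S,T)\in \Bc_\Delta$.

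The heart of the proof is a per-polymer weight bound of the form $w^{S,T}_G(\gamma)\leq \bigl(e^{-5}(q-1)^{-1}\Delta^{-3}\bigr)^{|V_\gamma|}$ for every $\gamma\in\mathcal{P}^{S,T}_G$, which is the input required by the polymer Markov chain framework. To establish it, fix any $u\in\partial V_\gamma\cap L$ and any neighbor $v\in V_\gamma\cap\partial u$; then $\sigma_\gamma(v)\notin T$, so by maximality of $(S,T)$ some $i_0\in S$ satisfies $B_{i_0,\sigma_\gamma(v)}<1$, and the $\delta$-matrix hypothesis forces $B_{i_0,\sigma_\gamma(v)}\leq \delta$. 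Plugging into \eqref{eq:Fu} and using $\sum_{i\in S}\lambda_i\leq q$ yields
\[
F_u \leq \sum_{i\in S}\lambda_i - (1-\delta)\lambda_{i_0} \leq \Bigl(\sum_{i\in S}\lambda_i\Bigr)\Bigl(1-\tfrac{(1-\delta)\min(\lambdab)}{q}\Bigr),
\]
and a symmetric bound holds on the $R$-side. Bounding the remaining numerator factors $\lambda_{\sigma_\gamma(u)}$ and $B_{\sigma_\gamma(u),\sigma_\gamma(v)}$ trivially by $1$, the denominator contribution associated with $V_\gamma$ yields at most $\min(\lambdab)^{-|V_\gamma|}$. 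Finally, since $|V_\gamma|\leq n/(6\Delta)$, the second part of Lemma~\ref{lem:expansion} applies and supplies $|\partial V_\gamma|\geq\Delta|V_\gamma|/7$; combining all of these estimates gives
\[
w^{S,T}_G(\gamma) \leq \min(\lambdab)^{-|V_\gamma|}\exp\!\Bigl(-\tfrac{(1-\delta)\min(\lambdab)\Delta}{7q}|V_\gamma|\Bigr) \leq \bigl(e^{-5}(q-1)^{-1}\Delta^{-3}\bigr)^{|V_\gamma|},
\]
where the last step is exactly the numerical hypothesis on $\Delta(1-\delta)\min(\lambdab)$ after taking logarithms.

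Given this weight bound, the remainder of the argument is a black-box application of the polymer Markov chain framework of \cite{chen2019fast}, as adapted to general spin systems in \cite{biclique}. A standard counting argument bounds the number of polymers of size $s$ containing a fixed vertex by at most $(e(q-1)\Delta^3)^s$ (there are at most $(e\Delta^2)^s$ choices of the $G^3$-connected vertex set $V_\gamma$ and at most $(q-1)^s$ spin assignments on it), so the weight bound above verifies the Koteck\'y--Preiss condition with constant slack, yielding convergence of the cluster expansion. The polymer Glauber chain of \cite{chen2019fast} then mixes in $O(n\log(n/\epsilon))$ steps on $\Omega^{S,T}_G$, and a simulated annealing schedule along the fugacities of polymer weights produces an $\epsilon$-approximation of $Z^{S,T}_G$ and an $\epsilon$-approximate sample from $\mu^{S,T}_G$ in time $O\big((n/\epsilon)^2\log^3(n/\epsilon)\big)$ for any $\epsilon\geq e^{-\Omega(n)}$. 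Aggregating over the $O(1)$ bicliques in $\Bc_\Delta$ and applying Lemma~\ref{lem:estimate} once more gives the claimed estimator for $Z_G$ and the approximate sampler for $\mu_G$ within the stated time. The main technical hurdle is the polymer weight bound in the middle step, and it is precisely this step that dictates the specific quantitative form of the hypothesis on $\Delta(1-\delta)\min(\lambdab)$; the other ingredients (Lemma~\ref{lem:estimate}, Lemma~\ref{lem:expansion}, and the Markov chain framework) are by now standard.
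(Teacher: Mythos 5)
The proposal is correct and follows essentially the same route as the paper: reduce to the polymer partition functions via Lemma~\ref{lem:estimate}, bound $F_u$ by $(\sum_{i\in S}\lambda_i)\bigl(1-\tfrac{(1-\delta)\min(\lambdab)}{q}\bigr)$ using maximality of $(S,T)$ together with the $\delta$-matrix hypothesis, combine this with part (2) of Lemma~\ref{lem:expansion} to obtain exponential decay of polymer weights, and then invoke the polymer Markov-chain framework of \cite{chen2019fast,biclique}. The only divergence is a minor numerical one: you claim the weight bound $w^{S,T}_G(\gamma)\le\bigl(e^{-5}(q-1)^{-1}\Delta^{-3}\bigr)^{|V_\gamma|}$, which matches the stated hypothesis, whereas the paper's in-line target is $\tau\ge 5+3\log((q-1)\Delta^3)$; these differ by a factor of $3$ on the log term, and the paper's own chain of inequalities in fact only verifies the weaker (single-log) version that you state, so your derivation is consistent with the lemma's hypothesis as written.
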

\begin{proof}
The main ingredient that we need to check is  the so-called polymer sampling condition \cite[Definition 4]{chen2019fast} for each polymer model defined by bicliques $(S,T)\in \Bc_{\Delta}$; this gives an $\epsilon$-counting algorithm for $Z^{S,T}_G$ and an $\epsilon$-sampling algorithm for $\mu^{S,T}_G$ with the desired guarantees.  The estimates in Lemma~\ref{lem:estimate} then yield that these algorithms can be extended to algorithms for $Z_G$, by the argument  in \cite[Proof of Theorem 3]{biclique}. 

The polymer sampling condition captures that the weight of the polymers as a function of their size decays exponentially relatively to the growth rate of the number of polymers (containing a vertex); in this case, since we are working with $G^3$ whose degree is bounded by $\Delta^3$, the condition we need to check, cf. \cite[Definition 4]{chen2019fast}, is that $w^{S,T}_G\leq \emm^{-\tau |\gamma|}$ for some constant $\tau\geq 5+3\log((q-1)\Delta^3)$.

Let $\gamma=(V_\gm,\sigma_\gm)\in \mathcal{P}^{S,T}_G$. Since the entries of $\BB,\lambdab$ are $\leq 1$, we have from \eqref{eq:wSTg} that
\[w^{S,T}_G(\gm)\leq \frac{\prod_{u\in \partial V_\gamma}F_u}{\big(\sum_{i\in S} \lambda_i)^{|V_\gm^+\cap L|}\big(\sum_{j\in T} \lambda_j\big)^{|V_\gm^+\cap R|}},\]
Now for $u\in \partial V_\gm\cap L$, recall that $F_u=\sum_{i\in S}\lambda_i\prod_{v\in V_\gm\cap \partial u}B_{i,\sigma_\gm(v)}$. We have that there exist $i\in S$ and $v\in V_\gm\cap\partial u$ such that $B_{i,\sigma_\gm(v)}\leq \delta$; otherwise, since $\BB$ is a $\delta$-matrix, we would have that $B_{i,\sigma_\gm(v)}=1$ for all $i\in S$, and therefore $(S,T\cup \{\sigma_\gm(v)\})$ would be a biclique, contradicting the maximality of $(S,T)$ since $\sigma_\gm(v)\notin T$ (by the definition of $\mathcal{P}^{S,T}_G$). We therefore obtain that $F_u\leq \sum_{i\in S}\lambda_i -(1-\delta)\min(\lambdab)$. Similarly, for $u\in \partial V_\gm\cap R$, we have that $F_u\leq \sum_{j\in T}\lambda_j -(1-\delta)\min(\lambdab)$. Using the crude bounds $\min(\lambdab)\leq \sum_{i\in S}\lambda_i,\sum_{j\in T}\lambda_j\leq q$, we obtain that
\[w^{S,T}_G(\gm)\leq (\min(\lambdab))^{-|V_\gm|}\Big(1-\frac{(1-\delta)\min(\lambdab)}{q}\Big)^{-|\partial V_{\gm}|}.\]
By the $\rho$-extremality assumption (or, more precisely, by the definition of the set of polymers $\Pc^{S,T}_G$), we have that $|V_{\gm}|\leq 2q \rho n\leq \tfrac{n}{6\Delta}$, and therefore, by Lemma~\ref{lem:expansion}, $|\partial V_{\gm}|\geq \tfrac{\Delta}{7} V_{\gm}$. We therefore have that
\[w^{S,T}_G(\gm)\leq \emm^{-\big(\log(\min{\lambdab})+\tfrac{\Delta(1-\delta)\min(\lambdab)}{7q}\big)|V_{\gm}|}\]
Thus the polymer sampling condition is satisfied as long as $\tfrac{\Delta(1-\delta)\min(\lambdab)}{7q}+\log(\min(\lambdab))\geq 5+3\log((q-1)\Delta^3)$, which gives the desired conclusion.
\end{proof}

\section{Establishing phase maximality}\label{sec:maximal}

In this section we establish phase maximality for colorings and hard-core model. 
In particular, we prove Lemmas~\ref{lem:coloring} and \ref{lem:ind} from Section~\ref{subsec:our_approach}. Recall that the tree-recursion on the $\Delta$-regular tree for a general $q$-spin system with interaction matrix $\BB$ and activity vector $\lambdab$ is given by
\begin{equation}\tag{\ref{eq:tr}}
r_i \propto \lambda_i \left(\mbox{$\sum_{j\in [q]}$} B_{ij} c_j\right)^{\Delta-1} \mbox{ for $i \in [q]$};
\qquad
c_j \propto \lambda_j \left(\mbox{$\sum_{i\in [q]}$} B_{ij} r_i\right)^{\Delta-1},  \mbox{ for $j \in [q]$}.
\end{equation}
For the colorings and hard-core models, Lemma~\ref{lem:maxima} shows that the fixpoints of \eqref{eq:tr} include all maximizers of the function
\[
\Phi_{\BB,\lambdab,\Delta}(\rb,\cb)=  \frac{\rb^\TT \BB \cb}{\|\Lambdab^{-1}\rb\|_p \|\Lambdab^{-1}\cb\|_p}
\]
where $p=\frac{\Delta}{\Delta-1}$ and $\Lambdab$ is the diagonal matrix whose $i$-th diagonal entry is $\lambda_i^{1/\Delta}$. 
Finally, Corollary~\ref{lem:phases} implies that, to show $\tfrac{1}{12\Delta q}$-maximality for a set of maximal bicliques $\Bc_\Delta$, it is enough to show that
all maximizers $(\rb^*,\cb^*)$ of $\Phi_{\BB,\lambdab,\Delta}$ are Hessian dominant and satisfy that $\|(\alphab^*,\betab^*)-(\gb_S,\gb_T)\|_\infty\leq \tfrac{1}{15\Delta q}$ for some $(S,T)\in \Bc_\Delta$, where $(\alphab^*,\betab^*)=(\fb(\rb^*),\fb(\cb^*))$ is given by 
\[
\alpha^*_i= \frac{ ( \lambda_i^{-1/\Delta} r^*_i )^p }{\|\Lambdab^{-1}\rb^*\|_p^p} 
\quad\text{and}\quad
\beta^*_i= \frac{ ( \lambda_i^{-1/\Delta} c^*_i )^p }{\|\Lambdab^{-1}\cb^*\|_p^p} 
\quad\text{for~} i \in [q]. 
\]


\subsection{Phase maximality for colorings}
In this subsection we prove Lemma~\ref{lem:coloring} by showing phase maximality for colorings. 

Let $q,\Delta\geq 3$ be integers and let $d = \Delta - 1$. 
For $q$-colorings, using the correspondence in Example~\ref{example}, the tree-recursion can be written as:
\begin{equation}\label{eq:coloring-tr}
r_i = \frac{(1-c_i)^{\Delta-1}}{\sum_{j\in [q]}(1-c_j)^{\Delta-1}}, \quad c_i=\frac{(1-r_i)^{\Delta-1}}{\sum_{j\in [q]}(1-r_j)^{\Delta-1}} \quad\mbox{for~} i\in [q].
\end{equation}
Note that $(\gb_{[q]}, \gb_{[q]})$ is a trivial solution to \eqref{eq:coloring-tr}. 
The following lemma summarizes results from \cite[Section 7]{antiferro} and describes the nontrivial fixpoints of the tree recursion \eqref{eq:coloring-tr} when $q\ge4$ is an even integer in the non-uniqueness region $q<\Dt$. 

\begin{lemma}[\cite{antiferro}]
\label{lem:tree-dphase}
	Suppose that $q\ge 4$ is even and $\Dt > q$. 
	Then there is a one-to-one correspondence between all maximizers of $\Phi_{\BB,\onesb,\Delta}$ and all bicliques in $\Bc_\Delta=\{(S,[q]\backslash S) \, \big| \, |S|=\frac{q}{2}\}$:
	there exists $a=a(\Dt,q)$, $b=b(\Dt,q)$ satisfying $0< b < a < \frac{2}{q}$ and $a+b=\frac{2}{q}$, such that 
	every biclique $(S,[q] \setminus S)\in \Bc_\Delta$ corresponds to a maximizer $(\rb^*,\cb^*)$ of the form
	\[
		r^*_i = 
		\left\{
		\begin{aligned}
		a,\quad & i\in S;\\
		b,\quad & i\in [q] \setminus S,
		\end{aligned}
		\right.
		\quad\text{and}\quad
		c^*_i = 
		\left\{
		\begin{aligned}
		b,\quad & i\in S;\\
		a,\quad & i\in [q] \setminus S.
		\end{aligned}
		\right.
	\]
	Furthermore, all maximizers of $\Phi_{\BB,\onesb,\Delta}$ are Hessian dominant. 
\end{lemma}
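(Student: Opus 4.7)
The plan is to first invoke Lemma~\ref{lem:maxima} to reduce the problem to a classification of the fixpoints of the coloring tree recursion~\eqref{eq:coloring-tr}, and then to compare the values of $\Phi_{\BB,\onesb,\Delta}$ and the Hessian matrix $\Lb$ at each fixpoint. The recursion is equivariant under the symmetric group $S_q$ acting simultaneously on the coordinates of $\rb$ and $\cb$, and the composite map $T:\rb\mapsto\rb$ obtained by substituting the $\cb$-equation into the $\rb$-equation is coordinatewise monotone decreasing; hence at any fixpoint $(\rb^*,\cb^*)$ the vectors $\rb^*$ and $\cb^*$ are reverse-sorted and $\cb^*$ is a permutation of $\rb^*$ pairing large with small values.

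Next I would isolate two-level fixpoints. Suppose $\rb^*$ takes the value $a$ on a subset $S\subseteq[q]$ of size $k$ and the value $b\neq a$ on $[q]\setminus S$. Then \eqref{eq:coloring-tr} forces $\cb^*$ to take value $b$ on $S$ and $a$ on $[q]\setminus S$, and the self-consistency conditions $ka+(q-k)b=kb+(q-k)a=1$ immediately yield $k=q/2$ and $a+b=2/q$. Substituting back reduces the recursion to the single-variable equation
\[
\left(\frac{1-a}{1-b}\right)^{\Delta-1}=\frac{b}{a},
\]
whose bifurcation at the symmetric point $a=b=1/q$ happens precisely at $\Delta=q$. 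A standard monotonicity and intermediate-value argument then shows that, for $\Delta>q$, it admits a unique pair $(a,b)$ with $0<b<a<2/q$, producing for each biclique $(S,[q]\setminus S)\in\Bc_\Delta$ exactly one candidate fixpoint of the stated form.

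The main obstacle is twofold: (i) to show that these two-level fixpoints are the only maximizers of $\Phi_{\BB,\onesb,\Delta}$ (in particular, the uniform fixpoint $\rb^*=\cb^*=\onesb/q$ loses, and no ``higher-level'' fixpoints beat them); and (ii) to verify Hessian dominance. For (i), the uniform fixpoint is ruled out by a direct comparison of $\Phi$-values exploiting $a+b=2/q$, which shows it is strictly dominated by any two-level fixpoint when $\Delta>q$; higher-level fixpoints are excluded by a Schur-convexity/symmetrization argument applied to the Lagrangian characterizing critical points of $\Phi$, in which one argues that among candidates with a fixed sorted multiset of $\rb$-values the maximum of $\Phi$ is attained on the orbit using a single pair of level sets, so iterating reduces any fixpoint to the two-level case. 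For (ii), the matrix $\Lb$ at a two-level fixpoint has an explicit block structure with respect to $(S,[q]\setminus S)$ and permutation symmetry within each block, so its spectrum decomposes into three invariant subspaces: the constant eigenvector (eigenvalue $1$); a one-dimensional ``staircase'' eigenvector supported on the two blocks with opposite signs; and two $(\tfrac{q}{2}-1)$-dimensional zero-sum subspaces supported within a single block on which $\Lb$ acts as a scalar. Each non-principal eigenvalue is an explicit rational function of $a,b,\Delta$ which can be bounded below $1/(\Delta-1)$ in absolute value using the defining equation of $(a,b)$ together with $0<b<a<2/q$. Both (i) and (ii) track the case analysis in \cite[Section~7]{antiferro}.
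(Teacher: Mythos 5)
The paper does not prove this lemma at all: it is imported verbatim from \cite[Section~7]{antiferro}, which the authors themselves describe as ``technically intense''. So the real question is whether your sketch would reconstruct that argument, and it would not: there is a genuine gap at the step where you claim that \eqref{eq:coloring-tr} ``forces $\cb^*$ to take value $b$ on $S$ and $a$ on $[q]\setminus S$'', from which $k=q/2$ then follows ``immediately''. The recursion only forces $\cb^*$ to be constant on $S$ and on $[q]\setminus S$, with values $\tilde b\propto(1-a)^{\Delta-1}$ and $\tilde a\propto(1-b)^{\Delta-1}$; nothing forces $\{\tilde a,\tilde b\}=\{a,b\}$. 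For $|S|=k\neq q/2$ there genuinely exist two-level fixpoints with four distinct values $a,b,\tilde a,\tilde b$, and these are critical points of $\Phi_{\BB,\onesb,\Delta}$. The entire difficulty of \cite[Section~7]{antiferro} is to compare the $\Phi$-values across all $k$ (and across all candidate level structures) and show that only the balanced $k=q/2$ fixpoints maximize; this is precisely the maximization that has only been carried out for even $q$, which is why the lemma (and \cref{thm:main1}) carries the even-$q$ restriction. Your argument never uses that $q$ is even and would equally ``prove'' the statement for odd $q$, where it remains open --- a clear sign the decisive step has been assumed rather than shown.

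The remaining parts of the sketch share this problem of asserting rather than proving the hard content. The exclusion of higher-level fixpoints via ``a Schur-convexity/symmetrization argument applied to the Lagrangian'' is a plausible-sounding placeholder, not an argument: the composed map $g\circ h$ with $r_i=g(c_i)$, $c_i=h(r_i)$ depends on global normalization constants, so one cannot symmetrize level sets independently without tracking how the normalizations (and hence $\Phi$) change. Likewise, the spectral block decomposition of $\Lb$ you describe is the right structure, but the claim that each non-principal eigenvalue is below $1/(\Delta-1)$ in absolute value ``using the defining equation of $(a,b)$'' is exactly the computation that needs to be done and is nontrivial near the boundary $\Delta\approx q$. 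What you have correctly and completely is the easy part: given the two-level ansatz with $k=q/2$, the reduction to the scalar equation $((1-a)/(1-b))^{\Delta-1}=b/a$ and the existence/uniqueness of the nontrivial pair $(a,b)$ for $\Delta>q$. The classification of maximizers and the Hessian dominance must be taken from \cite{antiferro}, as the paper does.
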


We now prove Lemma~\ref{lem:coloring}, which we restate here for convenience.

\begin{lemmacoloring}
\statelemmacoloring
\end{lemmacoloring}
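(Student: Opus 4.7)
Proof proposal: The plan is to route through Corollary~\ref{lem:phases}, which reduces $\tfrac{1}{12\Delta q}$-maximality to two conditions on the maximizers of $\Phi_{\BB,\onesb,\Dt}$: (a) all maximizers are Hessian dominant, and (b) each associated $(\alphab^*,\betab^*)=(\fb(\rb^*),\fb(\cb^*))$ lies within $\ell^\infty$-distance $\tfrac{1}{15\Delta q}$ of some $(\gb_S,\gb_{[q]\setminus S})$ with $(S,[q]\setminus S)\in\Bc_\Dt$. Since the hypothesis $\Dt\ge 8q\log\Dt$ implies $\Dt>q$, Lemma~\ref{lem:tree-dphase} applies and establishes (a) for free, while also pinning each $\rb^*,\cb^*$ down to just two values $a,b$ with $a+b=2/q$ and $0<b<a$. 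So only the quantitative estimate in (b) remains.

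Because $\lambdab=\onesb$, the map $\fb$ simplifies to $\alpha_i^*=(r_i^*)^p/\sum_j(r_j^*)^p$ with $p=\Dt/(\Dt-1)$, and a one-line computation yields
\[
\alpha_i^*=\frac{2}{q}\cdot\frac{1}{1+(b/a)^p}\text{ for } i\in S,\qquad \alpha_i^*=\frac{2}{q}\cdot\frac{(b/a)^p}{1+(b/a)^p}\text{ for } i\notin S,
\]
and symmetric expressions hold for $\beta_i^*$ with $S$ replaced by $[q]\setminus S$. Consequently $|\alpha_i^*-g_{S,i}|$ and $|\beta_i^*-g_{[q]\setminus S,i}|$ are each at most $(2/q)(b/a)^p\le (2/q)(b/a)$, so the whole problem collapses to a quantitative upper bound on the ratio $b/a$.

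The key observation is a ``free'' polarization bound that the tree recursion itself delivers: from \eqref{eq:coloring-tr} one reads off $a(1-a)^{\Dt-1}=b(1-b)^{\Dt-1}$, i.e., the function $\phi(t):=t(1-t)^{\Dt-1}$ takes the same value at $a$ and $b$. Since $\phi$ is unimodal on $(0,1)$ with its unique maximum at $t=1/\Dt$, and $a\ne b$, the solutions must straddle this maximum, forcing $b<1/\Dt<a$. Under $\Dt\ge 8q\log\Dt$ this immediately gives $b\le 1/(8q)$, and using $a+b=2/q$ a quick algebraic check (it suffices that $b\le 1/(2q-1)$) shows $(1-a)/(1-b)\le 1-1/q$. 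Raising to the $(\Dt-1)$st power yields
\[
\frac{b}{a}=\left(\frac{1-a}{1-b}\right)^{\Dt-1}\le \left(1-\frac{1}{q}\right)^{\Dt-1}\le \emm^{-(\Dt-1)/q}\le 2\,\Dt^{-8},
\]
where the last step uses $(\Dt-1)/q\ge 8\log\Dt-1/q$, which follows from the hypothesis. Plugging back gives $\|(\alphab^*,\betab^*)-(\gb_S,\gb_{[q]\setminus S})\|_\infty\le (2/q)\cdot 2\Dt^{-8}\le 1/(15\Dt q)$ with plenty of slack, and Corollary~\ref{lem:phases} delivers the desired $\tfrac{1}{12\Dt q}$-maximality with respect to $\Bc_\Dt$.

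The main obstacle is locating the free bound $b<1/\Dt$: without the unimodality trick one would be forced into a bootstrap (e.g., evaluate the defining equation at a candidate small value of $b$ and use monotonicity to trap the actual solution below it), which works but clutters the argument. With the observation that $\phi(a)=\phi(b)$, everything downstream is a short chain of elementary inequalities, and the large slack in the final numerical bound ($\Dt^{-8}$ versus the required $\Dt^{-1}$) suggests that the threshold $8q\log\Dt$ is comfortably enough.
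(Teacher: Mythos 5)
Your proof is correct, and you arrive at a bound that is in fact slightly stronger than the paper's. Structurally, you and the paper share the same skeleton (Lemma~\ref{lem:tree-dphase} for Hessian dominance and the two-value form of the maximizer, then Corollary~\ref{lem:phases} to conclude), but your route to the quantitative bound on $b/a$ is genuinely different and, I think, cleaner. The paper substitutes $h=a/b$, $t=1-2/q$, observes that $h$ is a fixpoint of $f(x)=\big(\tfrac{x+t}{tx+1}\big)^{\Delta-1}$, and shows $f(h_0)>h_0$ for $h_0=\emm^{(\Delta-1)/(2q)}$; the conclusion $h>h_0$ then rests on an (unspelled-out) monotonicity argument about $f(x)-x$. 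You instead read off $a(1-a)^{\Delta-1}=b(1-b)^{\Delta-1}$ directly from the tree recursion, exploit the unimodality of $\phi(t)=t(1-t)^{\Delta-1}$ (maximum at $t=1/\Delta$) to get the free localization $b<1/\Delta<a$, and then bound $b/a=\big(\tfrac{1-a}{1-b}\big)^{\Delta-1}\le(1-1/q)^{\Delta-1}\le\emm^{-(\Delta-1)/q}$. Your exponent $(\Delta-1)/q$ beats the paper's $(\Delta-1)/(2q)$, because the paper gives up a factor of two via the crude bound $\log(1+\eps)>\eps/2$. The unimodality trick replaces the paper's "consider the monotone intervals of $f(x)-x$" step with an argument that is self-contained and requires no claims about the global shape of $f$. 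All the intermediate numerical checks you rely on — $b<1/\Delta\le 1/(8q)\le 1/(2q-1)$, hence $\tfrac{1-a}{1-b}\le 1-1/q$, and the final $(2/q)\cdot 2\Delta^{-8}\le 1/(15\Delta q)$ — go through under the stated hypothesis $\Delta\ge 8q\log\Delta$ with $q\ge 4$.
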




\begin{proof}
Let $k = \frac{q}{2}$ and $d = \Delta - 1$ for convenience. 
By Lemmas~\ref{lem:maxima} and \ref{lem:tree-dphase}, for a given biclique $(S, [q] \setminus S) \in \Bc_\Delta$, the corresponding maximizer $(\rb^*,\cb^*)$ of $\Phi_{\BB,\onesb,\Delta}$ satisfies the tree-recursion \eqref{eq:coloring-tr} as follows:
\[
	\left\{
	\begin{aligned}
	a &= \frac{\left( ka + (k-1)b \right)^d}{k \left[ \left( ka + (k-1)b \right)^d + \left( (k-1)a + kb \right)^d \right]};\\
	b &= \frac{\left( (k-1)a + kb \right)^d}{k \left[ \left( ka + (k-1)b \right)^d + \left( (k-1)a + kb \right)^d \right]},
	\end{aligned}
	\right.
\]
where $a,b$ are the constants given in Lemma~\ref{lem:tree-dphase}. 
We are going to show that, for sufficiently large $\Delta$, the constant $a$ is close to $\frac{2}{q}$ and the constant $b$ is close to $0$. 
Taking the ratio of $a$ and $b$, we get
\[
	\frac{a}{b} = \left( \frac{ka + (k-1)b}{(k-1)a + kb} \right)^d, \mbox{ and therefore } h = \left( \frac{h + t}{th + 1} \right)^d. 
\]
where $h = \frac{a}{b} > 1$ and $t = \frac{k-1}{k} = 1- \frac{2}{q}$. Consider the function $f(x) = \left( \frac{x+t}{tx+1} \right)^d$. 
Then $h$ is a fixpoint of $f$ (i.e., $f(h) = h$). 
In fact, the function $f$ has three fixpoints: $x = h>1$, $x = 1$, and $x = \frac{1}{h} < 1$. 
Let $h_0 = \emm^{\frac{d}{2q}} > 3$ when $d\ge 3q$.
We show next that $h > h_0$. 
By considering the monotone intervals of $f(x) - x$, it suffices to show that $f(h_0) > h_0$. 
We then compute that 
\begin{align*}
	\frac{1}{d} \log f(h_0) &= \log \left( \frac{h_0+t}{th_0+1} \right) 
	= \log \left( 1 + \frac{(1-t)(h_0-1)}{th_0+1} \right)\\
	&> \frac{(1-t)(h_0-1)}{2(th_0+1)} 
	> \frac{1-t}{4}
	= \frac{1}{2q},
\end{align*}
where the first inequality follows from $\log(1+\eps) > \frac{\eps}{2}$ for $\eps\in(0,1] $ and the second inequality is due to $h_0>3$ and $t<1$. 
Therefore, $f(h_0) > \emm^{\frac{d}{2q}} = h_0$ and thus $h > h_0$. 
	
Finally, notice that $(\alphab^*,\betab^*)=(\fb(\rb^*),\fb(\cb^*))$ is also in the form of
\[
		\alpha^*_i = 
		\left\{
		\begin{aligned}
		a',\quad & i\in S;\\
		b',\quad & i\in [q] \setminus S,
		\end{aligned}
		\right.
		\quad\text{and}\quad
		\beta^*_i = 
		\left\{
		\begin{aligned}
		b',\quad & i\in S;\\
		a',\quad & i\in [q] \setminus S,
		\end{aligned}
		\right.
\]
where $0< b'<a'<\frac{2}{q}$, $a'+b' = \frac{2}{q}$, and $\frac{a'}{b'} = \left( \frac{a}{b} \right)^p > \frac{a}{b} = h > h_0 = \emm^{\frac{d}{2q}}$.
It follows that
\[
	\norm{(\alphab^*,\betab^*)-(\gb_S,\gb_{[q] \setminus S})}_\infty = b' = \frac{2}{q (\frac{a'}{b'} + 1)} < \frac{2\emm^{-\frac{d}{2q}}}{q} \le \frac{1}{15 \Delta q}, 
\]
where the last inequality holds for $\Delta \ge 8 q \log \Delta$. 
The lemma then follows from Corollary~\ref{lem:phases}, using the fact from Lemma~\ref{lem:tree-dphase} that all maximizers are Hessian dominant. 
\end{proof}

Our proof of Lemma~\ref{lem:coloring} can also be modified to show that $O(\frac{1}{\Delta q})$-maximality fails when $q = \Omega(\frac{\Delta}{\log \Delta})$. 
This allows us to prove Lemma~\ref{lem:fail} from Section~\ref{subsec:our_approach}. 

\begin{proof}[Proof of Lemma~\ref{lem:fail}]
We use the same notation and approach as in the proof of Lemma~\ref{lem:coloring}. 
In particular, we show that $h < h_1$ for $h_1 = \emm^{\frac{4d}{q}}$, which can be deduced from $f(h_1) < h_1$. 
We have that
\[
\frac{1}{d} \log f(h_1) = \log \left( \frac{h_1+t}{th_1+1} \right) 
	= \log \left( 1 + \frac{(1-t)(h_1-1)}{th_1+1} \right)\\
	< \frac{(1-t)(h_1-1)}{th_1+1}
	\le \frac{1-t}{t}
	\le \frac{4}{q},
\]
where the last inequality is because $t = 1-\frac{2}{q} \ge \frac{1}{2}$. 
Therefore, we get $f(h_1) < \emm^{\frac{4d}{q}} = h_1$ and consequently $h < h_1$. 
It follows that $\frac{a'}{b'} = h^p < \emm^{\frac{4dp}{q}} = \emm^{\frac{4\Delta}{q}}$, 
and thus for $q \ge \frac{4\Delta}{\log \Delta}$ one has
\[
\norm{(\alphab^*,\betab^*)-(\gb_S,\gb_{[q] \setminus S})}_\infty = b' = \frac{2}{q (\frac{a'}{b'} + 1)} 
> \frac{\emm^{-\frac{4\Delta}{q}}}{q} \ge \frac{1}{\Delta q}. 
\]
Combining with Lemma~\ref{lem:hessiandominant}, this gives that $\frac{1}{2\Delta q}$-maximality fails when $q \ge \frac{4\Delta}{\log \Delta}$. 
\end{proof}

\subsection{Phase maximality for hard-core model}

In this subsection we consider the hard-core model and establishes phase maximality. The goal is to prove Lemma~\ref{lem:ind} from Section~\ref{subsec:our_approach}. 

Let $\Delta \ge 3$ be an integer and $\lambda > 0$ be a real. 
Recall from Example~\ref{example} that the interaction matrix $\BB=\{B_{ij}\}_{i,j\in\{0,1\}}$ for the hard-core model is given by $B_{00} = B_{01} = B_{10} = 1$ and $B_{11} = 0$, and the activity vector with fugacity $\lambdab=\{\lambda_i\}_{i\in\{0,1\}}$ is given by $\lambda_0 = 1$ and $\lambda_1 = \lambda$. 
Hence, the tree-recursion \eqref{eq:tr} becomes:
\[
r_1 = \frac{\lambda c_0^{\Delta-1}}{\lambda c_0^{\Delta-1} + 1}, \quad
r_0 = \frac{1}{\lambda c_0^{\Delta-1} + 1}, \quad
c_1 = \frac{\lambda r_0^{\Delta-1}}{\lambda r_0^{\Delta-1} + 1}, \quad
c_0 = \frac{1}{\lambda r_0^{\Delta-1} + 1}.
\]
As is standard, it would be easier to work with the ratios $x = \frac{r_1}{r_0}$ and $y = \frac{c_1}{c_0}$, so that the tree-recursion can be equivalently written as
\begin{equation}\label{eq:hc-tr}
x = \frac{\lambda}{(1+y)^{\Delta-1}}, \quad y = \frac{\lambda}{(1+x)^{\Delta-1}}.
\end{equation}
Note that the function $f(x) = \frac{\lambda}{(1+x)^{\Delta-1}}$ has a unique fixpoint $x_0$, and we are interested in the nontrivial solutions to \eqref{eq:hc-tr} (i.e., $(x,y) \neq (x_0,x_0)$). We restate Lemma~\ref{lem:ind} here for convenience. 
\begin{lemmaind}
\statelemmaind
\end{lemmaind}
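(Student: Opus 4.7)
The plan mirrors that of Lemma~\ref{lem:coloring} and appeals to Corollary~\ref{lem:phases}. By Lemma~\ref{lem:maxima}, every maximizer of $\Phi_{\BB,\lambdab,\Delta}$ is a fixpoint of \eqref{eq:hc-tr}. Since $\lambda_c(\Delta)\sim e/\Delta$, the regime $\lambda\geq 50/\Delta$, $\Delta\geq 50$ lies deep in the non-uniqueness region, and the standard analysis of the hard-core tree recursion (see \cite{antiferro}) gives exactly three solutions: a symmetric fixpoint $(x_0,x_0)$ with $x_0>1/(\Delta-2)$, which is a saddle point of $\Phi$, and two asymmetric fixpoints $(x^*,y^*),(y^*,x^*)$ with $0<x^*<x_0<y^*<\lambda$, which are the maximizers of $\Phi$ corresponding respectively to the bicliques $(0,01)$ and $(01,0)$ of $\Bc_\Delta$.

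The main technical step is the sharp upper bound $x^*<T$ with $T:=\lambda/(1+\lambda/2)^{\Delta-1}$. Let $f(z):=\lambda/(1+z)^{\Delta-1}$ and $h:=f\circ f$; since $h$ is increasing with $x^*$ its smallest positive fixpoint, any $T\in(0,x_0)$ with $h(T)<T$ forces $x^*<T$. The inequality $h(T)<T$ is equivalent to $(1+T)^{\Delta-1}<2$ (since then $u:=\lambda/(1+T)^{\Delta-1}>\lambda/2$ and hence $h(T)=\lambda/(1+u)^{\Delta-1}<\lambda/(1+\lambda/2)^{\Delta-1}=T$). Using $\log(1+\lambda/2)\geq \lambda/(\lambda+2)$, we obtain $(1+\lambda/2)^{\Delta-1}\geq \exp\bigl(\lambda(\Delta-1)/(\lambda+2)\bigr)\geq e^{16}$ uniformly in the regime, so $(\Delta-1)T\leq \lambda(\Delta-1)e^{-16}\ll \log 2$; the same bound gives $T<1/(\Delta-2)\leq x_0$.

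Next I convert $x^*\leq T$ into marginal bounds. Using $x^*=\lambda/(1+y^*)^{\Delta-1}$ one derives the clean identities $\lambda^{-1/(\Delta-1)}(x^*)^{\Delta/(\Delta-1)}=\lambda/(1+y^*)^\Delta$ (and symmetrically for $y^*$), which yield
\[
\alpha_1^*=\frac{\lambda}{(1+y^*)^\Delta+\lambda}, \qquad \beta_1^*=\frac{\lambda}{(1+x^*)^\Delta+\lambda}.
\]
From $y^*\geq \lambda/(1+T)^{\Delta-1}\geq \lambda/2$ we get $\alpha_1^*\leq \lambda/(1+\lambda/2)^\Delta\leq 1/(30\Delta)$ easily, using $(1+\lambda/2)^\Delta\geq e^{16}$. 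For $\beta_1^*$, direct computation together with $(1+x^*)^\Delta-1\leq 2\Delta x^*$ gives
\[
\Big|\beta_1^*-\frac{\lambda}{1+\lambda}\Big|=\frac{\lambda((1+x^*)^\Delta-1)}{(1+\lambda)((1+x^*)^\Delta+\lambda)}\leq \frac{2\lambda\Delta x^*}{(1+\lambda)^2};
\]
splitting into $\lambda\leq 1$ (using $\lambda/(1+\lambda)^2\leq \lambda$) and $\lambda\geq 1$ (using $\lambda/(1+\lambda)^2\leq 1/\lambda$), each case with $x^*\leq T$ reduces to a numerical comparison of the form $60\lambda^2\Delta^2\leq (1+\lambda/2)^{\Delta-1}$, which holds comfortably. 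Together with the symmetric argument for $(y^*,x^*)$, this yields $\|(\alphab^*,\betab^*)-(\gb_{\{0\}},\gb_{\{0,1\}})\|_\infty\leq 1/(30\Delta)$.

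Finally, for Hessian dominance: the matrix $\Lb$ for the $2$-state hard-core model is $2\times 2$ with eigenvalues $1$ and $-\sqrt{r_1^* c_1^*}$, and since $r_1^*\leq x^*\leq T$ and $c_1^*\leq \lambda$, we have $r_1^* c_1^*\leq T\lambda=\lambda^2/(1+\lambda/2)^{\Delta-1}\ll 1/(\Delta-1)^2$. Applying Corollary~\ref{lem:phases} with $q=2$ then yields the claimed $\tfrac{1}{24\Delta}$-maximality. The main obstacle I anticipate is the case analysis in the $\beta_1^*$ bound: carefully handling the worst regime $\lambda\sim 1$ (where $\lambda/(1+\lambda)^2$ peaks at $1/4$) so that the numerical constants are honored uniformly across the entire range $\lambda\geq 50/\Delta$.
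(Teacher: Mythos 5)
Your proposal follows essentially the same route as the paper: pass to the ratios $x=r_1^*/r_0^*$, $y=c_1^*/c_0^*$ of an (asymmetric) fixpoint of \eqref{eq:hc-tr}, prove a strong upper bound on the smaller ratio $x^*$, convert it into bounds on $(\alphab^*,\betab^*)$, cite Hessian dominance in the non-uniqueness region from \cite{GSV-ising,antiferro}, and invoke Corollary~\ref{lem:phases} with $q=2$ (target $\tfrac{1}{30\Delta}$, conclusion $\tfrac{1}{24\Delta}$-maximality). Your closed forms $\alpha_1^*=\lambda/((1+y^*)^\Delta+\lambda)$ and $\beta_1^*=\lambda/((1+x^*)^\Delta+\lambda)$ are correct and coincide with the identities $\alpha_1^*/\alpha_0^*=\lambda(1+y^*)^{-\Delta}$, $\beta_1^*/\beta_0^*=\lambda(1+x^*)^{-\Delta}$ used in the paper. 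The only genuine methodological difference is how you bound $x^*$: you iterate $h=f\circ f$ and exhibit $T=\lambda(1+\lambda/2)^{-(\Delta-1)}$ with $h(T)<T$, whereas the paper compares level sets of $t\mapsto(1+t)^{\Delta-1}/t$ together with a lower bound $y\ge\lambda/3$ to get $x\le\tfrac{1}{30\lambda\Delta^2}$. Both routes terminate in the same kind of numerical inequality (yours: $60\lambda^2\Delta^2\le(1+\lambda/2)^{\Delta-1}$; the paper's: $11(\lambda\Delta)^2\le(1+\lambda/3)^{\Delta-1}$), and your bound on $x^*$ is, if anything, sharper.

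The gap is in the numerical justifications that lean on the single uniform estimate $(1+\lambda/2)^{\Delta-1}\ge e^{16}$. That estimate is correct, but the lemma imposes no upper bound on $\lambda$ or $\Delta$, so it does not suffice where you use it. Concretely: (i) the step ``$(\Delta-1)T\le\lambda(\Delta-1)e^{-16}\ll\log 2$'' fails once $\lambda(\Delta-1)>e^{16}\log 2$ (e.g.\ $\Delta=50$, $\lambda=2\cdot 10^5$, or $\lambda=1$, $\Delta=10^7$), even though the conclusion $(1+T)^{\Delta-1}<2$ remains true there; (ii) the bound $\alpha_1^*\le\lambda(1+\lambda/2)^{-\Delta}\le\tfrac{1}{30\Delta}$ ``using $(1+\lambda/2)^\Delta\ge e^{16}$'' only covers $30\lambda\Delta\le e^{16}$; and (iii) the Hessian estimate $T\lambda\ll 1/(\Delta-1)^2$ has the same issue. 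All three are repaired by exactly the two-case ($\lambda\le 1$ versus $\lambda>1$) analysis you already propose for the $\beta_1^*$ term — equivalently, by noting that $(1+\lambda/2)^{\Delta-1}\ge\binom{\Delta-1}{k}(\lambda/2)^k$ beats any fixed power of $\lambda\Delta$ once $\lambda\Delta\ge 50$ — so this is bookkeeping rather than a missing idea, and it mirrors how the paper isolates and proves its inequality \eqref{eq:toshow}. One further imprecision: the eigenvalues of $\Lb$ are not $\{1,-\sqrt{r_1^*c_1^*}\}$; that pair has trace $1-\sqrt{r_1^*c_1^*}$, while the trace of $\Lb$ is $\sqrt{r_0^*c_0^*}$, and $\sqrt{r_0^*c_0^*}+\sqrt{r_1^*c_1^*}<1$ at an asymmetric fixpoint. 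Your direct verification of Hessian dominance is therefore not right as stated, but this is immaterial since (as the paper does) Hessian dominance in this regime can simply be cited from \cite{GSV-ising,antiferro}.
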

\begin{proof}
Take an arbitrary maximizer $(\rb,\cb)$ of $\Phi_{\BB,\lambdab,\Delta}$ and let $x = r^*_1 / r^*_0$, $y = c^*_1 / c^*_0$. 
It is known that $x \neq x_0$, $y \neq x_0$, and $(\rb,\cb)$ is Hessian dominant when $\lambda > \lambda_c(\Delta)$ is in the non-uniqueness region; see, e.g., \cite{GSV-ising,antiferro}. Suppose that $x < y$ without loss of generality.

We first show that $x \le \frac{1}{30\lambda\Delta^2}$ when $\Delta \ge 50$ and $\lambda \ge \frac{50}{\Delta}$.
By \eqref{eq:hc-tr} we have 
\[
\lambda = x(1+y)^{\Delta-1} = y(1+x)^{\Delta-1}.
\]
Define $f(t) = \frac{(1+t)^{\Delta-1}}{t}$, and note that $f(x) = f(y)$. The function $f(t)$ is monotone decreasing when $t < \frac{1}{\Delta-2}$ and monotone increasing when $t > \frac{1}{\Delta-2}$. 
This implies $x < \frac{1}{\Delta-2} < y$. 
We then deduce from \eqref{eq:hc-tr} that for $\Delta \ge 50$ and $\lambda \ge \frac{50}{\Delta}$,
\[
y = \frac{\lambda}{(1+x)^{\Delta-1}} \ge \frac{\lambda}{( 1+\frac{1}{\Delta-2})^{\Delta-1}} 
\ge \frac{\lambda}{3} 
> \frac{1}{\Delta -2}. 
\]
Hence,
\[
f(x) = f(y) \ge f\left( \frac{\lambda}{3} \right) = \frac{3}{\lambda} \left( 1 + \frac{\lambda}{3} \right)^{\Delta-1}. 
\]
Meanwhile, we have 
\[
f\left( \frac{1}{30\lambda \Delta^2} \right) = 30\lambda \Delta^2 \left( 1 + \frac{1}{30\lambda \Delta^2} \right)^{\Delta-1}
\le 30\lambda \Delta^2 \emm^{\frac{1}{30\lambda \Delta}} \le 33 \lambda \Delta^2. 
\]
We claim that 
\begin{equation}\label{eq:toshow}
11 (\lambda \Delta)^2 \le \left( 1 + \frac{\lambda}{3} \right)^{\Delta-1}
\end{equation}
when $\Delta \ge 50$ and $\lambda \ge \frac{50}{\Delta}$. 
Given \eqref{eq:toshow}, we get
\[
f\left( \frac{1}{30\lambda \Delta^2} \right) \le 33 \lambda \Delta^2 \le \frac{3}{\lambda} \left( 1 + \frac{\lambda}{3} \right)^{\Delta-1} \le f(x)
\]
and thus $x \le \frac{1}{30\lambda \Delta^2}$ as wanted. 
It remains to prove \eqref{eq:toshow}. 
We consider two cases. 
If $\lambda \le 1$, then we have
\[
\frac{4}{3} \left( 1 + \frac{\lambda}{3} \right)^{\Delta-1}
\ge \left( 1 + \frac{\lambda}{3} \right)^\Delta
\ge \emm^{\frac{\lambda \Delta}{3+\lambda}} 
\ge \emm^{\frac{\lambda \Delta}{4}} 
\ge 15 (\lambda \Delta)^2,
\]
where the second inequality follows from $1+\eps \ge \exp(\frac{\eps}{1+\eps})$ for $\eps \in [0,1]$, and the last inequality holds when $\lambda \Delta \ge 50$. 
Meanwhile, if $\lambda > 1$ then we have
\[
\frac{9}{\lambda^2} \left( 1 + \frac{\lambda}{3} \right)^{\Delta-1}
\ge \left( 1 + \frac{\lambda}{3} \right)^{\Delta-3}
\ge \left( \frac{4}{3} \right)^{\Delta-3}
\ge 100 \Delta^2,
\]
where the last inequality holds when $\Delta \ge 50$. 
Therefore, \eqref{eq:toshow} holds when $\Delta \ge 50$ and $\lambda \ge \frac{50}{\Delta}$, and we conclude with $x \le \frac{1}{30\lambda\Delta^2}$ in this parameter regime.

Now, for a fixed $\Delta$, both the fixpoint $(\rb^*,\cb^*)$ of the tree recursion with $r^*_1 / r^*_0 < c^*_1 / c^*_0$ and the ground state $(\gb_0,\gb_{01})$ converge to $(1,0,0,1)$ as $\lambda$ tends to infinity.
Consequently, $(\alphab^*,\betab^*)$ converges to the same point as well. 
Hence, for $3\le \Delta < 50$, there exists a universal constant $C>0$ such that $\norm{(\alphab^*,\betab^*)-(\gb_0,\gb_{01})}_\infty \le \frac{1}{30 \Delta}$ whenever $\lambda \ge \frac{C}{\Delta}$. 
It remains to deal with the case that $\Delta \ge 50$ and $\lambda \ge \frac{50}{\Delta}$. 
Observe that
\[
\norm{(\alphab^*,\betab^*)-(\gb_0,\gb_{01})}_\infty = \max\left\{ \alpha^*_1, \frac{\lambda}{1+\lambda}-\beta^*_1 \right\}.
\]
We will upper bound the two terms using our bound on $x$. 
Recall that $p = \frac{\Delta}{\Delta-1}$.
First, we have
\[
\alpha^*_1 \le \frac{\alpha^*_1}{\alpha^*_0} = \left( \lambda^{-\frac{1}{\Delta}} \frac{r^*_1}{r^*_0} \right)^p = \lambda^{-\frac{1}{\Delta-1}} x^p \le 2 x \le \frac{1}{15 \lambda \Delta^2} \le \frac{1}{30\Delta}, 
\]
where $\lambda^{-\frac{1}{\Delta-1}} \le 2$ in our parameter regime. 
Next, notice that
\[
\frac{\lambda}{1+\lambda}-\beta^*_1 
= \frac{\lambda}{1+\lambda} - \frac{\beta^*_1 / \beta^*_0}{1 + \beta^*_1 / \beta^*_0}
= \frac{\lambda - \beta^*_1 / \beta^*_0}{(1+\lambda) ( 1 + \beta^*_1 / \beta^*_0 )}
\le \lambda - \frac{\beta^*_1}{\beta^*_0}. 
\]
Since we have
\[
\frac{1}{\lambda} \frac{\beta^*_1}{\beta^*_0} 
= \frac{1}{\lambda} \left( \lambda^{-\frac{1}{\Delta}} \frac{c^*_1}{c^*_0} \right)^p
= \left( \frac{y}{\lambda} \right)^p
= \frac{1}{(1+x)^\Delta},
\]
it follows that
\[
\frac{\lambda}{1+\lambda}-\beta^*_1  
\le \lambda \left( 1 - \frac{1}{(1+x)^\Delta} \right)
\le \lambda \left( 1 - \emm^{-\Delta x} \right)
\le \lambda\Delta x 
\le \frac{1}{30\Delta}. 
\]
This yields $\norm{(\alphab^*,\betab^*)-(\gb_0,\gb_{01})}_\infty \le \frac{1}{30\Delta}$. 
The lemma then follows from Corollary~\ref{lem:phases}. 
\end{proof}

\bibliographystyle{plain}
\bibliography{\jobname}

\end{document}